\theoremstyle{plain}
\newtheorem{ax}{\protect\axiomname}
\theoremstyle{definition}
\newtheorem{condition}{\protect\conditionname}
\theoremstyle{definition}
\newtheorem{defn}{\protect\definitionname}
\theoremstyle{plain}
\newtheorem{prop}{\protect\propositionname}
\theoremstyle{plain}
\newtheorem{thm}{\protect\theoremname}
\theoremstyle{definition}
 \newtheorem{example}{\protect\examplename}
\theoremstyle{plain}
\newtheorem{lem}{\protect\lemmaname}
\newcommand{\qw}[1][-1]{\ar @{-} [0,#1]}
\newcommand{\gate}[1]{*{\xy *+<.6em>{#1};p\save+LU;+RU **\dir{-}\restore\save+RU;+RD **\dir{-}\restore\save+RD;+LD **\dir{-}\restore\POS+LD;+LU **\dir{-}\endxy} \qw}
\newcommand{\measureD}[1]{*{\xy*+=+<.5em>{\vphantom{\rule{0em}{.1em}#1}}*\cir{r_l};p\save*!R{#1} \restore\save+UC;+UC-<.5em,0em>*!R{\hphantom{#1}}+L **\dir{-} \restore\save+DC;+DC-<.5em,0em>*!R{\hphantom{#1}}+L **\dir{-} \restore\POS+UC-<.5em,0em>*!R{\hphantom{#1}}+L;+DC-<.5em,0em>*!R{\hphantom{#1}}+L **\dir{-} \endxy} \qw}
\newcommand{\multimeasureD}[2]{*+<1em,.9em>{\hphantom{#2}}\save[0,0].[#1,0];p\save !C *{#2},p+LU+<0em,0em>;+RU+<-.8em,0em> **\dir{-}\restore\save +LD;+LU **\dir{-}\restore\save +LD;+RD-<.8em,0em> **\dir{-} \restore\save +RD+<0em,.8em>;+RU-<0em,.8em> **\dir{-} \restore \POS !UR*!UR{\cir<.9em>{r_d}};!DR*!DR{\cir<.9em>{d_l}}\restore \qw}
\newcommand{\multigate}[2]{*+<1em,.9em>{\hphantom{#2}} \qw \POS[0,0].[#1,0];p !C *{#2},p \save+LU;+RU **\dir{-}\restore\save+RU;+RD **\dir{-}\restore\save+RD;+LD **\dir{-}\restore\save+LD;+LU **\dir{-}\restore}
\newcommand{\ghost}[1]{*+<1em,.9em>{\hphantom{#1}} \qw}
\newcommand{\Qcircuit}[1][0em]{\xymatrix @*=<#1>} 
\newcommand{\pureghost}[1]{*+<1em,.9em>{\hphantom{#1}}}
\newcommand{\multiprepareC}[2]{*+<1em,.9em>{\hphantom{#2}}\save[0,0].[#1,0];p\save !C
  *{#2},p+RU+<0em,0em>;+LU+<+.8em,0em> **\dir{-}\restore\save +RD;+RU **\dir{-}\restore\save
  +RD;+LD+<.8em,0em> **\dir{-} \restore\save +LD+<0em,.8em>;+LU-<0em,.8em> **\dir{-} \restore \POS
  !UL*!UL{\cir<.9em>{u_r}};!DL*!DL{\cir<.9em>{l_u}}\restore}
\newcommand{\prepareC}[1]{*{\xy*+=+<.5em>{\vphantom{#1\rule{0em}{.1em}}}*\cir{l^r};p\save*!L{#1} \restore\save+UC;+UC+<.5em,0em>*!L{\hphantom{#1}}+R **\dir{-} \restore\save+DC;+DC+<.5em,0em>*!L{\hphantom{#1}}+R **\dir{-} \restore\POS+UC+<.5em,0em>*!L{\hphantom{#1}}+R;+DC+<.5em,0em>*!L{\hphantom{#1}}+R **\dir{-} \endxy}}
\newcommand{\poloFantasmaCn}[1]{{{}^{#1}_{\phantom{#1}}}}
\newcommand{\rA}{\mathrm{A}}
\newcommand{\rB}{\mathrm{B}}
\newcommand{\rC}{\mathrm{C}}
\newcommand{\rD}{\mathrm{D}}
\newcommand{\rS}{\mathrm{S}}
\newcommand{\rE}{\mathrm{E}}
\newcommand{\cA}{\mathcal{A}}
\newcommand{\cB}{\mathcal{B}}
\newcommand{\cC}{\mathcal{C}}
\newcommand{\cT}{\mathcal{T}}
\providecommand{\axiomname}{Axiom}
\providecommand{\conditionname}{Condition}
\providecommand{\definitionname}{Definition}
\providecommand{\examplename}{Example}
\providecommand{\lemmaname}{Lemma}
\providecommand{\propositionname}{Proposition}
\providecommand{\theoremname}{Theorem}
\begin{document}
\title{Universal structure of objective states in all fundamental causal
theories}
\author{Carlo Maria Scandolo}
\email{carlomaria.scandolo@ucalgary.ca}

\affiliation{Department of Mathematics \& Statistics, University of Calgary, T2N
1N4 Calgary, AB, Canada}
\affiliation{Institute for Quantum Science and Technology, University of Calgary,
T2N 1N4 Calgary, AB, Canada}
\author{Roberto Salazar}
\email{roberto.salazar@uj.edu.pl}

\affiliation{Institute of Informatics, National Quantum Information Centre, Faculty
of Mathematics, Physics and Informatics, University of Gda\'{n}sk,
80-308 Gda\'{n}sk, Poland }
\affiliation{Faculty of Physics, Astronomy and Applied Computer Science, Jagiellonian
University, 30-348 Kraków, Poland }
\author{Jaros\l aw K.\ Korbicz}
\affiliation{Center for Theoretical Physics, Polish Academy of Sciences, 02-668
Warsaw, Poland}
\author{Pawe\l{} Horodecki}
\affiliation{International Centre for Theory of Quantum Technologies, University
of Gda\'{n}sk, 80-308 Gda\'{n}sk, Poland}
\affiliation{Faculty of Applied Physics and Mathematics, National Quantum Information
Centre, Gda\'{n}sk University of Technology, 80-233 Gda\'{n}sk, Poland}
\begin{abstract}
A crucial question is how objective and classical behavior arises
from a fundamental physical theory. Here we provide a natural definition
of a decoherence process valid in all causal theories, and show how
its behavior can be extremely different from the quantum one. Remarkably,
despite this, we prove that the so-called spectrum broadcast structure
characterizes all objective states in every fundamental causal theory,
exactly as in quantum mechanics. Our results show a stark contrast
between the extraordinarily diverse decoherence behavior and the universal
features of objectivity.
\end{abstract}
\maketitle

\section{Introduction}

A common experience in our everyday life is that different observers
agree on their observations. This agreement means that macroscopic
physics is \emph{objective}. Note that this is a general feature of
classical physics, but it contrasts with quantum theory, where states
are generally disturbed by the act of observation, and sometimes an
agreement between observers is impossible \citep{Frauchiger-Renner,Brukner_objectivity}.
Nevertheless, in quantum mechanics there are objective states, in
the sense that various observers can determine them without disturbance
\citep{Zurek-objectivity}. It is argued that such objective states
may indeed be responsible for the objectivity we experience in our
everyday life \citep{Zurek-objectivity,Objectivity}. In quantum mechanics,
the theory of decoherence first \citep{Zeh-first,Zeh-second,Decoherence-review,Review-decoherence},
later quantum Darwinism \citep{QDarwinism,Zurek-objectivity,Witness,Darwinism-branches,BPH,Knott}
and the presence of the so-called spectrum broadcast states (SBSs)
\citep{Broadcasting-scenarios,J1PRL,Objectivity,JJ3,JJ4,JJ5,Miro,SBS-measurements,Strong-QDarwinism,Jarek-truth,QD-SBS}
have been proposed as explanations for the emergence of classicality
and objectivity out of the quantum world.

In this paper, we extend the study of the emergence of objectivity
beyond quantum theory, to arbitrary physical theories \citep{Hardy-informational-1,Barrett,General-no-broadcasting,Chiribella-purification,hardy2011,Janotta-Hinrichsen,Barnum2016}.
First, this enables us to identify which basic part of quantum mechanics
is actually responsible for objectivity, by looking at it from the
outside, in a landscape of conceivable alternative theories. Second,
this analysis can be used as a test of physical consistency of post-quantum
theories in the quest for quantum gravity \citep{modHeisenberg,Blackhole,Darkmatter},
as every quantum extension must still account for objective macroscopic
physics.

This paper is organized as follows. In section~\ref{sec:A-general-framework}
give a brief overview of the formalism to address arbitrary physical
theories. In section~\ref{sec:Classical-sub-theories} we explain
how we can identify classical sub-theories of a given physical theory
(if they exist). The notion of decoherence is introduced and examined
in section~\ref{sec:Decoherence}, while objectivity and the universal
form of objective states are studied in section~\ref{sec:Objectivity-Game}.
Finally, in section~\ref{sec:Emergence-of-composite-classical} we
identify two axioms that guarantee a local behavior in the emergence
of classicality in composite systems. Conclusions and further directions
are discussed in section~\ref{sec:Conclusions-and-discussion}.

\section{A general framework for physical theories\label{sec:A-general-framework}}

Our first challenge is to choose a suitable formalism for the study
of arbitrary physical theories. We do this by adopting the framework
of general probabilistic theories (GPTs). For more details, we refer
the reader to appendix~\ref{sec:General-probabilistic-theories}.

The \emph{state} $\rho$ of a physical system $\mathrm{A}$ is associated
with a preparation of it; after that, one can manipulate it by applying
some \emph{transformation} $\mathcal{T}$, which can possibly transform
the input system into another system $\mathrm{B}$. Finally, one can
measure the final system $\mathrm{B}$ by applying an \emph{effect}
$e$ to it: in this case, the system does not exist any more, but
is destroyed in the process. By repeating this experiment several
times, the experimenter can estimate the probability of the overall
process, denoted by $\left(e\middle|\mathcal{T}\middle|\rho\right)$.
Note that here a state is viewed as a particular kind of transformation:
a transformation without an input system. Similarly, an effect is
a transformation without an output system. In this setting, one can
set up a suitable notion of sequential and parallel composition; the
former is denoted by $\mathcal{AB}$, where $\mathcal{A}$ comes after
$\mathcal{B}$, the latter is denoted by $\mathcal{A}\otimes\mathcal{B}$.

The application of a generic non-deterministic device in an experiment
can be described as a collection of mutually exclusive processes $\left\{ \mathcal{T}_{i}\right\} _{i\in\mathsf{X}}$,
where $i\in\mathsf{X}$ represents the (classical) outcome read  by
the experimenter. We will call such a collection $\left\{ \mathcal{T}_{i}\right\} _{i\in\mathsf{X}}$
a \emph{test} (\emph{measurement} if it is a collection of effects).
If a test is deterministic (i.e.\ there is a single outcome), we
will call it a \emph{channel}.

A state is said to be \emph{pure} if the only way to write it as a
sum of other states is the trivial way: $\rho$ is pure if $\rho=\sum_{i}\rho_{i}$
implies $\rho_{i}=p_{i}\rho$, with $\left\{ p_{i}\right\} $ a probability
distribution. A non-pure state is called \emph{mixed}.

In our analysis we assume the fundamental axiom of Causality \citep{Chiribella-purification},
satisfied by both classical and quantum theory:
\begin{ax}[Causality]
The probability that a transformation occurs is independent of the
choice of tests performed on its output.
\end{ax}
Causality is equivalent to the existence of a unique deterministic
effect $u$ for every system \citep{Chiribella-purification}, which
can be used as the analog of the partial trace to discard systems
in multipartite settings.

In causal theories, we can restrict ourselves to preparations $\rho$
that are performed with certainty, (i.e.\ those for which $\left(u\middle|\rho\right)=1$)
\citep{Chiribella-purification}. These are called deterministic states,
and, given a measurement $\left\{ a_{i}\right\} _{i\in\mathsf{X}}$,
for them we have $\sum_{i\in\mathsf{X}}\left(a_{i}\middle|\rho\right)=1$.
In this situation, if all probabilities are allowed, the theory is
convex \citep{Chiribella-purification}. In particular, this means
that the state space of a theory is convex, and that all conical combinations
of valid effects that lead to a valid effect are allowed. The latter
means that effects span a convex cone.

\section{Classical sub-theories\label{sec:Classical-sub-theories}}

The states of finite-dimensional classical theory are probability
distributions over a finite set, and the effects are \emph{all} the
linear functionals that yield a number in $\left[0,1\right]$ on states.
A most notable feature of classical theory is that classical pure
states can be jointly perfectly distinguished in a single-shot measurement.

Therefore, to find classical sub-theories of a given physical theory,
we need to find pure states that are perfectly distinguishable. The
states $\left\{ \rho_{i}\right\} _{i=1}^{n}$ are said to be perfectly
distinguishable if there exists a measurement $\left\{ a_{i}\right\} _{i=1}^{n}$
such that $\left(a_{i}\middle|\rho_{j}\right)=\delta_{ij}$ for all
$i$, $j$. In addition, if there is no other state $\rho_{0}$ such
that the states $\left\{ \rho_{i}\right\} _{i=0}^{n}$ are perfectly
distinguishable, the set $\left\{ \rho_{i}\right\} _{i=1}^{n}$ is
said to be maximal. One may wonder why we are interested specifically
in perfectly distinguishable pure states as far as classical sub-theories
are concerned, rather than generic states. The reason is that it is
not restrictive to assume those states perfectly distinguishable states
to be pure. Indeed, if $\left\{ \rho_{i}\right\} _{i=1}^{n}$ are
mixed and perfectly distinguishable, then it is possible to find pure
states $\left\{ \alpha_{i}\right\} _{i=1}^{n}$ that are perfectly
distinguishable: it is enough to take $\alpha_{i}$ to be any pure
state in a convex decomposition of $\rho_{i}$ into pure states. More
details are provided in appendix~\ref{sec:Classicality}.

We are interested in the largest classical theory that can arise from
a given set of perfectly distinguishable pure states $S$, therefore
we will look for maximal sets of perfectly distinguishable pure states.
Picking one such set $S=\left\{ \alpha_{i}\right\} _{i=1}^{d}$, we
define the \emph{classical set} $\boldsymbol{\alpha}$ of dimension
$d$ as $\boldsymbol{\alpha}:=\mathrm{Conv}\left\{ \alpha_{i},i=1,\ldots,d\right\} $.
This is the simplex generated by the $\alpha_{i}$'s, and it represents
the states of a particular classical sub-theory.

At this point, we restrict the effects of the original theory to the
classical set $\boldsymbol{\alpha}$, identifying those that give
the same probabilities on all classical states. These will be the
classical effects. In appendix~\ref{subsec:Restricting-effects},
we show that, in this way, we get precisely \emph{all} effects of
the classical theory with $\boldsymbol{\alpha}$ as the state space.
In other words, with this strategy, it is enough to choose a classical
set $\boldsymbol{\alpha}$ to find a classical sub-theory of a causal
theory.

All physical theories require a classical interface by which the observer
reads the outcome of an experiment. Every fundamental theory should
be able to describe this classical interface \citep{CPT}, otherwise
we would be forced to accept an insurmountable division between the
underlying physical world, and the macroscopic one, in which the observer
performs their experiments. Consequently, a fundamental theory should
obey the following principle:
\begin{condition}[Emergence of Classical Concepts]
\label{cond:ECC}A fundamental physical theory of nature must contain
classical states (or an arbitrarily good approximation thereof).
\end{condition}
Not all theories obey condition~\ref{cond:ECC}: in figs.~\ref{fig:The-state-space}
and \ref{fig:A-section-of} we depict the state and effect spaces
of a theory that violates it.
\begin{figure}
\begin{centering}
\includegraphics[scale=0.7]{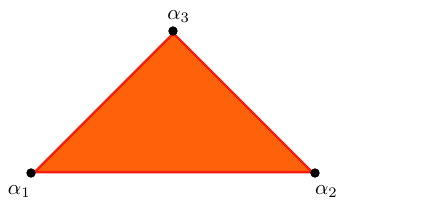}
\par\end{centering}
\caption{\label{fig:The-state-space}The state space of a restricted trit coincides
with that of a classical trit. $\alpha_{1}$, $\alpha_{2}$, $\alpha_{3}$
are the pure states.}
\end{figure}
\begin{figure}
\begin{centering}
\includegraphics[scale=0.7]{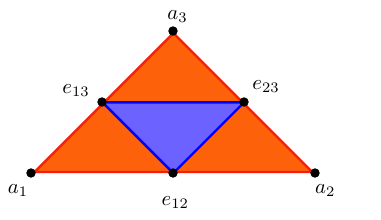}
\par\end{centering}
\caption{\label{fig:A-section-of}A cross-section of the effect convex cone
of a classical trit (in orange) and a restricted trit (in blue). $a_{1}$,
$a_{2}$, $a_{3}$ are linear functionals such that $\left(a_{i}\middle|\alpha_{j}\right)=\delta_{ij}$,
but they are not allowed effects of the restricted trit: the only
pure effects are $e_{12}$, $e_{13}$, $e_{23}$. The restriction
on effects is evident.}
\end{figure}
 The states of the basic system are the same as in the classical trit
(with pure states $\alpha_{1}$, $\alpha_{2}$, $\alpha_{3}$), but
not all linear functionals are allowed, i.e.\ there is an intrinsic
restriction on the effect space. Indeed, the only pure effects we
have are $e_{12}$, $e_{13}$, $e_{23}$, with $e_{ij}=\frac{1}{2}\left(a_{i}+a_{j}\right)$,
where $a_{i}$ is the linear functional of the trit such that $\left(a_{i}\middle|\alpha_{j}\right)=\delta_{ij}$.
This restricted trit theory has no subsets of pure states that can
be distinguished perfectly in a single-shot, and therefore no classical
states. Even more so, \emph{such absence persists in all composite
systems} (details in appendix~\ref{sec:restricted trit}). The restriction
plays a crucial role: we show that if there is no restriction on the
effects, a theory admits at least the classical bit as a sub-theory
(appendix~\ref{subsec:Classical-no-restriction}).

In the following, we will always assume that a theory satisfies condition~\ref{cond:ECC}.
Alternatively, the presence of classical states can be postulated
directly \citep{Barnum-interference,Colleagues}, or enforced by mathematical
(appendix~\ref{subsec:Classical-no-restriction}) or physical \citep{Hardy-informational-2,QPL15}
principles.

\section{Decoherence\label{sec:Decoherence}}

For a complete description of the emergence of classicality we need
to find a transition towards classical theory among the transformations
of a given theory. This provides a classical interface that emerges
from the physical description of nature \citep{CPT}.

In analogy to the well-known process of quantum decoherence \citep{Zeh-first,Zeh-second,Decoherence-review,Review-decoherence},
a similar mechanism in GPTs was studied in refs.~\citep{Selby-entanglement2,Selby-leaks,CPT,Hyperdecoherence,2roads}.
Our approach to GPT decoherence is different, as it focuses on its
\emph{minimal} properties as a process. First, note that if classicality
were reached only probabilistically, it would be an intrinsically
unstable theory, contrary to experimental evidence. This motivates
searching for decoherence among deterministic processes, i.e.\ among
the channels of the theory. Moreover, a complete decoherence should
send all states to classical states, and preserve classical states
themselves. This motivates the following definition, which characterizes
decoherence as a resource-destroying map \citep{Resource-destroying}:
\begin{defn}
\label{def:complete decoherence}Given the classical set $\boldsymbol{\alpha}$,
a channel $D_{\boldsymbol{\alpha}}$ is a \emph{complete decoherence}
if
\begin{enumerate}
\item $D_{\boldsymbol{\alpha}}\rho\in\boldsymbol{\alpha}$ for every state
$\rho$;
\item $D_{\boldsymbol{\alpha}}\gamma=\gamma$ for every $\gamma\in\boldsymbol{\alpha}$.
\end{enumerate}
\end{defn}
One can apply the complete decoherence to all effects of the theory,
which naturally produces the set of classical effects defined through
the restriction procedure introduced above (appendix~\ref{sec:Complete-decoherence}).
The question is whether, given a classical set, a complete decoherence
on it always \emph{exists}. Consider a measurement $\left\{ a_{i}\right\} _{i=1}^{d}$
that distinguishes the pure states $\left\{ \alpha_{i}\right\} _{i=1}^{d}$
perfectly. We can construct the measure-and-prepare test $\left\{ \left|\alpha_{i}\right)\left(a_{i}\right|\right\} _{i=1}^{d}$
(see definition~\ref{def:measure and prepare}). By coarse-graining
over all the outcomes of $\left\{ \left|\alpha_{i}\right)\left(a_{i}\right|\right\} _{i=1}^{d}$,
we get the channel $\widehat{D}_{\boldsymbol{\alpha}}=\sum_{i=1}^{d}\left|\alpha_{i}\right)\left(a_{i}\right|$.
\begin{prop}
\label{prop:TID}For every classical set $\boldsymbol{\alpha}$, the
channel $\widehat{D}_{\boldsymbol{\alpha}}=\sum_{i=1}^{d}\left|\alpha_{i}\right)\left(a_{i}\right|$
is a complete decoherence.
\end{prop}
The proof is in appendix~\ref{subsec:Test-induced-decoherence-and}.
In the light of this result, we will call every channel of the form
$\widehat{D}_{\boldsymbol{\alpha}}=\sum_{i=1}^{d}\left|\alpha_{i}\right)\left(a_{i}\right|$
a \emph{test-induced decoherence} (TID) with respect to the \emph{fixed}
classical set $\boldsymbol{\alpha}$.

Proposition~\ref{prop:TID} implies that in all causal theories there
always exists a complete decoherence on every classical set, induced
by measuring and forgetting the outcome. Despite this universal form,
given a classical set $\boldsymbol{\alpha}$, such a complete decoherence
can be highly non-unique (appendix~\ref{subsec:Test-induced-decoherence-and}),
a fact that was missed by previous works \citep{Selby-entanglement2,Hyperdecoherence}.
Furthermore, as opposed to quantum mechanics, there are GPTs \emph{where
mixed states can be decohered to pure ones} by the TID, as represented
in fig.~\ref{fig:square decoherence main} (see appendix~\ref{subsec:Test-induced-decoherence-and}
for details).
\begin{figure}
\begin{centering}
\includegraphics[viewport=0bp 0bp 159bp 143bp,scale=0.7]{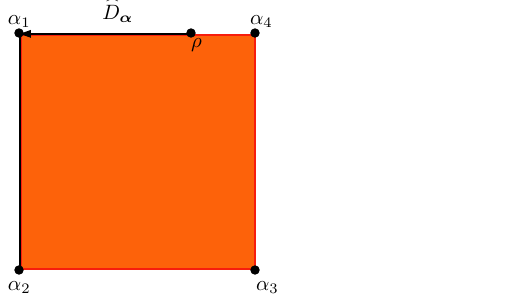}
\par\end{centering}
\caption{\label{fig:square decoherence main}In this GPT, the state space is
a square, and the vertical side in black is the classical set $\boldsymbol{\alpha}=\mathrm{Conv}\left\{ \alpha_{1},\alpha_{2}\right\} $.
The action of the test-induced decoherence $\widehat{D}_{\boldsymbol{\alpha}}$
on the \emph{mixed} state $\rho$ is represented by a black arrow.
Notice that $\rho$ is decohered to the \emph{pure} state $\alpha_{1}$;
so the test-induced decoherence in GPTs can increase the purity of
a state, \emph{contrary to what happens in quantum theory}.}
\end{figure}
 All this shows how, in general, GPTs \emph{differ from quantum} even
concerning decoherence. Nevertheless, as we show below, in general
the emergence of objectivity is something remarkably different from
decoherence.

\section{Objectivity Game\label{sec:Objectivity-Game}}

The existence of classical states and decoherence processes is still
not enough to reproduce the full classical picture, as from our everyday
experience we know that the results of measurements are objective
\citep{QDarwinism,Darwinism-branches}. To address this issue, we
use the setting of quantum Darwinism \citep{QDarwinism,Darwinism-branches},
where a system is surrounded by several fragments of environment,
each of which is accessible to one observer. In quantum theory, objective
states are SBS states \citep{Broadcasting-scenarios,Objectivity,JJ1,JJ3,JJ5,SBS-measurements,Strong-QDarwinism,korbicz2020roads},
i.e.\ states of the form $\rho=\sum_{j}p_{j}\left|j\right\rangle \left\langle j\right|_{\mathrm{S}}\otimes\rho_{j,\mathrm{E}_{1}}\otimes\ldots\otimes\rho_{j,\mathrm{E}_{n}}$,
where for every environment fragment $\mathrm{E}_{k}$ the states
$\left\{ \rho_{j,\mathrm{E}_{k}}\right\} $ have orthogonal support.

Recall that a state of a system is objective if multiple observers
can find it out without perturbing it \citep{QDarwinism,Zurek-objectivity}.
Moreover, each observer should always be able to repeat their measurement,
and always obtain the same result. To model non-disturbance, we extend
the so-called Bohr non-disturbance criterion presented in refs.~\citep{Bohr,Wiseman,Objectivity,korbicz2020roads}
to arbitrary physical theories (cf.\ also ref.~\citep{Perinotti-discord}):
\begin{defn}
\label{def:non-disturbing}A test $\left\{ \mathcal{A}_{i}\right\} _{i\in\mathsf{X}}$
is\emph{ non-disturbing} on $\rho$ if $\sum_{i\in\mathsf{X}}\mathcal{A}_{i}\rho=\rho.$
\end{defn}
We can recast the concept of objectivity as a multiplayer game, called
the \emph{objectivity game} (OG)\textcolor{blue}{, }inspired by ref.\textcolor{blue}{~\citep{Objectivity}}.
In this game, the goal is to determine the state of a target system
$\mathrm{S}$ which decoheres to a classical set $\boldsymbol{\alpha}$.
We assume that there is a special observer on system $\mathrm{S}$
acting as a referee checking the findings of $n$ players, who act
independently by testing some environment fragment $\mathrm{E}_{k}$,
correlated with the system. They win if they are able to determine
the state of the target system $\mathrm{S}$ without disturbing the
joint state $\rho_{\mathrm{S}\mathrm{E}_{1}\ldots\mathrm{E}_{n}}$.
The Bohr non-disturbance criterion is argued to be the right concept
here \citep{Objectivity}. We insist that the $n$ players should
act independently in this game, therefore we enforce the following
condition \citep{Objectivity,Strong-QDarwinism,korbicz2020roads},
which is widely accepted in the quantum research on objectivity:
\begin{condition}[Strong independence]
The only correlation between the players is the common information
about the system.
\end{condition}
In general, to determine a state, several rounds of tests are necessary,
but the players cannot change their devices between the various rounds.
For this reason, all the observers, including the referee, want to
be able to repeat their tests several times, without affecting the
outcome. This is also a necessary condition for objectivity: if something
is objective, each observer must be able to obtain the same outcome
when they probe a system. These tests are called \emph{sharply repeatable
tests} (SRTs)\textcolor{blue}{.}
\begin{defn}
$\left\{ P_{i}\right\} _{i\in\mathsf{X}}$ is a \emph{SRT} if
\[
P_{i}P_{j}=\delta_{ij}P_{i}.
\]
\end{defn}
SRTs are an operational characterization of tests that can be repeated
several times, always yielding the same outcome. This is a highly
desirable feature for a test, but do such tests exist at all? The
answer is positive in causal theories that admit perfectly distinguishable
states. Indeed, if $\left\{ \rho_{i}\right\} _{i=1}^{n}$ is a set
of perfectly distinguishable states, and $\left\{ a_{i}\right\} _{i=1}^{n}$
is the associated measurement, by Causality we can consider the measure-and-prepare
test $\left\{ \mathcal{A}_{i}\right\} _{i=1}^{n}$, with $\mathcal{A}_{i}=\left|\rho_{i}\right)\left(a_{i}\right|$.
This is an SRT because

\[
\mathcal{A}_{i}\mathcal{A}_{j}=\left|\rho_{i}\right)\left(a_{i}\middle|\rho_{j}\right)\left(a_{j}\right|=\delta_{ij}\left|\rho_{i}\right)\left(a_{i}\right|=\delta_{ij}\mathcal{A}_{i}.
\]
In quantum theory, these measure-and-prepare tests are quantum instruments
of the form $\left\{ \mathcal{M}_{j}\right\} $, where
\[
\mathcal{M}_{j}\left(\rho\right)=\mathrm{tr}\left(E_{j}\rho\right)\sigma_{j},
\]
where the $\sigma_{j}$'s have orthogonal supports, and $E_{j}$ is
the orthogonal projector onto the support of $\sigma_{j}$. In general,
however, not every SRT needs to be of this form: in quantum theory,
a von Neumann measurement with projectors of rank greater than 1 is
obviously an SRT, but it is not of the measure-and-prepare type. For
the scope of the OG, it is not important to characterize all the SRTs
of a theory; it is enough to know that they exist. Indeed, non-disturbing
SRTs were identified as providing objective information in causal
theories in ref.~\citep{Perinotti-discord}.

The first move of the OG is from the referee who performs the SRT
associated with some classical set $\boldsymbol{\alpha}$ of $\mathrm{S}$
(see appendix~\ref{sec:Sharply-repeatable-tests}). Since the outcome
is not communicated to the players, the system is decohered to $\rho_{\mathrm{S}}=\sum_{i=1}^{r}p_{i}\alpha_{i}$,
where $p_{i}>0$ for every $i$ (recall we know that such a decoherence
is always guaranteed to exist). The players win the game if they all
correctly guess the outcome of the referee. On the other hand, the
players are not restricted to this form of SRT. This is because in
the setting of quantum Darwinism, they represent fractions of the
environment, which have much more degrees of freedom than the referee's
system.

The operational meaning of objectivity is the agreement of all the
involved observers (including the referee). Therefore, an objective
state is the joint state of the referee and the players that allows
them to win the OG. To this end, we introduce SBS states for causal
theories:
\begin{defn}
An \emph{SBS} state is of the form $\rho=\sum_{i=1}^{r}p_{i}\alpha_{i}\otimes\rho_{i,\mathrm{E}_{1}}\otimes\ldots\otimes\rho_{i,\mathrm{E}_{n}}$,
where $\left\{ \alpha_{i}\right\} _{i=1}^{r}$ are perfectly distinguishable
pure states, and for every $k$, $\left\{ \rho_{i,\mathrm{E}_{k}}\right\} _{i=1}^{r}$
are perfectly distinguishable too.
\end{defn}
It is not hard to show that states in the SBS form are objective (appendix~\ref{subsec:form objective}),
so every causal theory has objective states. However, the fundamental
question is the characterization of \emph{all} objective states of
a theory. Our main result is that the states of the SBS form are the
\emph{only} objective states in every fundamental causal theory. 
\begin{thm}
\label{thm:main}In any fundamental causal theory (i.e.\ obeying
condition~\ref{cond:ECC}), the players can win the OG if and only
if the joint state is an SBS state.
\end{thm}
The proof is in appendix~\ref{subsec:form objective}. This result
is exceptionally general, for it demonstrates that only the principle
of Causality is enough to ensure the emergence of objectivity. Furthermore,
the structure of objective states is universal and isomorphic to the
quantum one.

\section{Emergence of composite classical theories\label{sec:Emergence-of-composite-classical}}

Our results obtained so far have focused only on single systems, but
here we identify the minimal assumptions to ensure classicality in
composite systems. Specifically, it is enough to impose the following
two axioms:
\begin{ax}
\label{axm:product pure}The product of two pure states is a pure
state.
\end{ax}
\begin{ax}[Information Locality \citep{Hardy-informational-2}]
\label{axm:information locality}If $\left\{ \alpha_{i}\right\} _{i=1}^{d_{\mathrm{A}}}$
is a maximal set of perfectly distinguishable pure states of $\mathrm{A}$,
and $\left\{ \beta_{j}\right\} _{j=1}^{d_{\mathrm{B}}}$ is a maximal
set of perfectly distinguishable pure states of $\mathrm{B}$, $\left\{ \alpha_{i}\otimes\beta_{j}\right\} _{i=1,}^{d_{\mathrm{A}}}\phantom{}_{j=1}^{d_{\mathrm{B}}}$
is a maximal set of perfectly distinguishable pure states of $\mathrm{AB}$.
\end{ax}
These axioms represent a ``locality constraint'' in the emergence
of classicality. Indeed if these two axioms fail, the informational
content of the classical composite system cannot be reduced to the
informational contents of each classical subsystem.

If there are no ``delocalized'' classical systems, we expect that
decohering $\mathrm{AB}$ will be the same as decohering $\mathrm{A}$
and $\mathrm{B}$ separately, i.e.\ $D_{\boldsymbol{\alpha\beta}}=D_{\boldsymbol{\alpha}}\otimes D_{\boldsymbol{\beta}}$.
Even if the theory satisfies both axioms~\ref{axm:product pure}
and \ref{axm:information locality}, this property may not be satisfied
by general complete decoherences. However, it is so for TIDs (appendix~\ref{sec:Emergence-of-composite}).

\section{Conclusions and discussion\label{sec:Conclusions-and-discussion}}

In summary, Causality alone, in conjunction with the principle of
\emph{Emergence of Classical Concepts} (condition~\ref{cond:ECC}),
is the backbone of our results. Here we solve the problem of the emergence
of objectivity beyond quantum, identifying SBS and the process leading
to it as its ultimate origin. Our analysis has striking outcomes:
unlike in quantum mechanics \citep{Zurek-objectivity,QDarwinism},
objectivity and decoherence are two distinct phenomena. Indeed, decoherence
behavior can be exceedingly different from the quantum case, while
objectivity and SBS are universal across causal theories. For instance,
in appendix~\ref{subsec:Test-induced-decoherence-and}, we show that
complete decoherences can even increase the purity of a state in some
GPTs; or in another example, there is an uncountable number of distinct
complete decoherences. This also demonstrates how GPTs can differ
radically from quantum theory, even within the scope of our analysis.
In light of this, our result about the universality of the SBS form
is even more surprising: it shows that the emergence of objectivity
is, instead, a transversal phenomenon in physics that unifies physical
theories with quite different behaviors, as illustrated in fig.~\ref{fig:summary-1}.
\begin{figure}
\begin{centering}
\includegraphics[width=1\columnwidth]{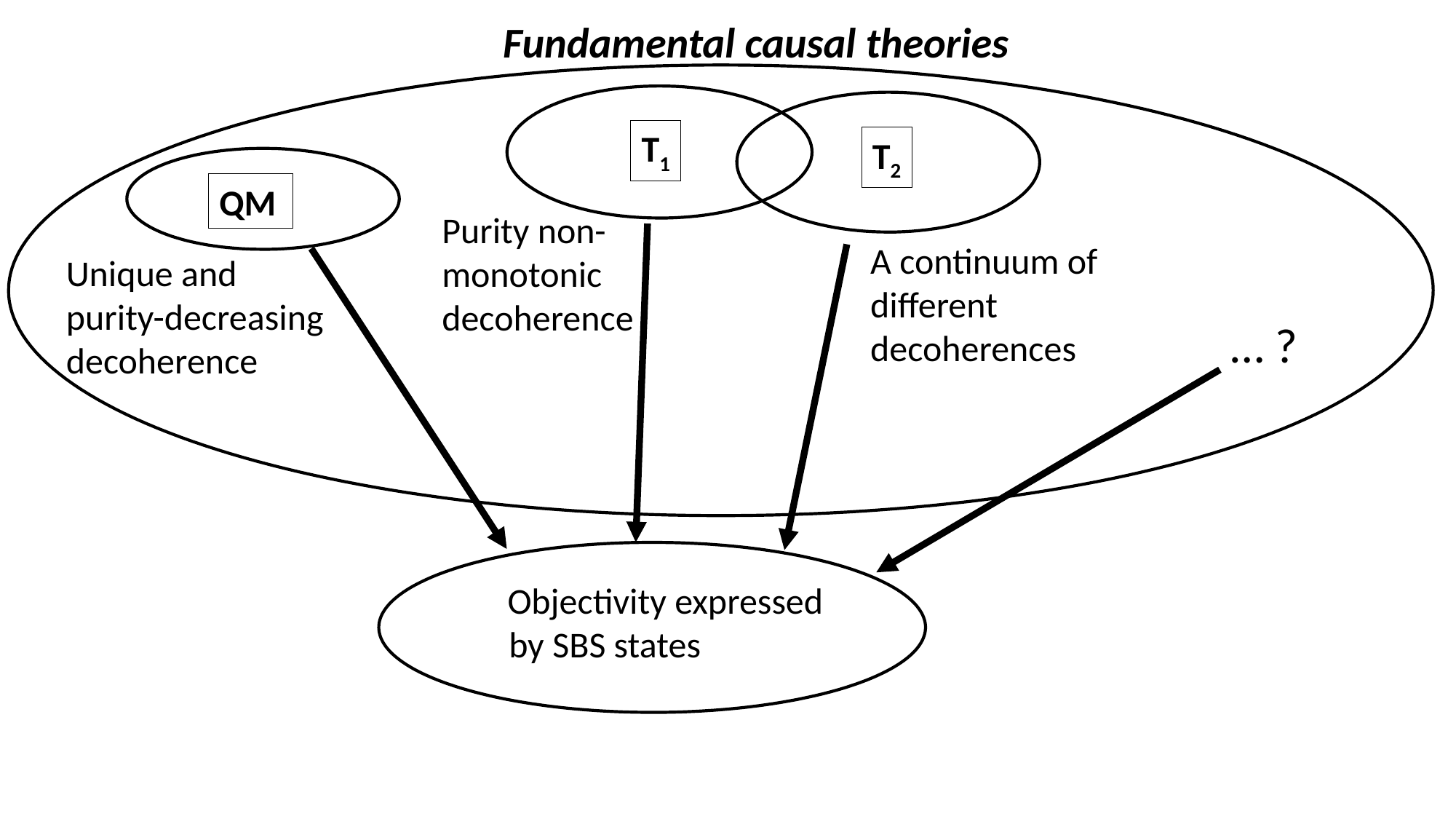}
\par\end{centering}
\caption{\label{fig:summary-1}Different fundamental causal theories have different
and unusual decoherence behaviors. Unlike in quantum mechanics (QM),
in some theories the decoherence can increase the purity of a state
(see example~\ref{exa:purity-decoherence}). In other situations,
given one set of classical states, there is a \emph{continuum of different
decoherences} on it (see example~\ref{exa:uniqueness}), which contrasts
sharply with quantum theory, where the decoherence is unique. Despite
this, objectivity is universal, and the form of objective states is
the same (SBS) across all theories.}
\end{figure}

However, we note that Causality is not enough to guarantee the emergence
of \emph{localized} classical composite systems: it is not always
possible to reduce the information of the decohered system to the
information of its sub-systems. We showed two additional axioms that
are sufficient to ensure an appropriate non-holistic classical behavior.

Our results support the approach to objectivity presented in refs.~\citep{J1PRL,Objectivity,JJ1,JJ3,Miro,JJ4,JJ5,SBS-measurements,korbicz2020roads}
based on SBS states (recently shown to be stronger than the notion
of quantum Darwinism \citep{Strong-QDarwinism,Strong-QDarwinism2,korbicz2020roads}).
In particular, they identify the validity of the SBS approach far
beyond the limits of quantum mechanics. Our findings also suggest
that other approaches to classicality, such as quantum Darwinism and
the associated broadcasting of information to the environment \citep{Selby-leaks,Muller-Darwinism}
could be extended to GPTs, opening a fruitful research field we intend
to investigate in subsequent works. On the other hand, the universality
of the shape of objective states suggests that Causality is really
a strong assumption and, to find some unique behaviors, one should
weaken it, e.g.\ by bringing in the structure of relativistic space-time
\citep{Hardy-quantum-gravity,Hardy-causal,Spacetime}. Last but not
least, concerning the increase of purity by decoherence in some non-quantum
GPTs, one may have an interesting alternative: either we have some
paradoxical process where decoherence increases the information about
a system (if we follow the standard quantum intuition), or one must
reconsider the concept of information of the system itself, at least
partially decoupling it from the concept of purity \citep{Entropy-Barnum,Entropy-Short,Entropy-Kimura}.
We leave this as an open issue for further research.
\begin{acknowledgments}
The work was made possible through the support of grants from the
John Templeton Foundation. The opinions expressed in this publication
are those of the authors and do not necessarily reflect the views
of the John Templeton Foundation. C.M.S.\ acknowledges support from
the Engineering and Physical Sciences Research Council (EPSRC) through
the doctoral training Grant No.\ 1652538, from Oxford-Google DeepMind
graduate scholarship, from the Natural Sciences and Engineering Research
Council of Canada (NSERC), from the Pacific Institute for the Mathematical
Sciences (PIMS), and from a Faculty of Science, University of Calgary,
Grand Challenge award. C.M.S.\ acknowledges the hospitality of the
National Quantum Information Centre in Sopot, and useful comments
by Jon Barrett. R.S.\ acknowledges the support of Comision Nacional
de Investigacion Ciencia y Tecnologia (CONICYT) Programa de Formacion
Capital Humano Avanzado/Beca de Postdoctorado en el extranjero (BECAS
CHILE) 74160002, John Templeton Foundation, grant Sonata Bis 5 (Grant
No.\ 2015/18/E/ST2/00327) from the National Science Centre, and the
Foundation for Polish Science through the grant TEAM-NET project (Contract
No.\ POIR.04.04.00-00- 17C1/18-00). J.K.K.\ and P.H.\ acknowledge
the support of the John Templeton Foundation. P.H.\ acknowledges
partial support from the Foundation for Polish Science (IRAP project,
ICTQT, Contract No.\ MAB/2018/5, co-financed by EU within Smart Growth
Operational Programme).
\end{acknowledgments}

\bibliographystyle{apsrev4-2}
\bibliography{Objectivity_main}

\appendix
\onecolumngrid

\section{General probabilistic theories\label{sec:General-probabilistic-theories}}

GPTs is a framework for a theory-independent description of physical
probabilistic processes. They have been used in several successful
reconstructions of quantum theory from information-theoretic postulates
\citep{Hardy-informational-1,Chiribella-informational,Hardy-informational-2,Brukner,masanes,Barnum-interference,Diagrammatic},
and they are the subject of active research in the quantum information
community \citep{Chiribella-Scandolo-entanglement,TowardsThermo,Purity,Colleagues,Ciaran-1,Ciaran-2,Selby-1,Selby-2,Takagi-Regula}.
The essence of this approach is that any physical theory must describe
experiments performed in a laboratory, and predict the probabilities
of their outcomes. These experiments are usually carried out by connecting
several devices. Each device represents a physical process, and wires
connecting them carry physical systems. Therefore for every physical
transformation $\mathcal{T}$ transforming system $\mathrm{A}$ into
system $\mathrm{B}$ (e.g.\ a beam splitter, a Stern-Gerlach magnet,
etc.), it is natural to represent it as\[ \begin{aligned}\Qcircuit @C=1em @R=.7em @!R { &\qw \poloFantasmaCn{\rA} &\gate{\cT} &\qw \poloFantasmaCn{\rB} &\qw }\end{aligned}~. \]Some
devices have no input, others have no output. They are represented
respectively as\[ \begin{aligned}\Qcircuit @C=1em @R=.7em @!R { &\prepareC{\rho} &\qw \poloFantasmaCn{\rA} &\qw }\end{aligned}~, \]and
\[ \begin{aligned}\Qcircuit @C=1em @R=.7em @!R {&\qw \poloFantasmaCn{\rA} &\measureD{e}}\end{aligned}~. \]Processes
with no input are called \emph{states}, and those with no output are
called \emph{effects}. If the output system of a transformation $\mathcal{A}$
matches the input system of another transformation $\mathcal{B}$,
we can apply $\mathcal{B}$ after $\mathcal{A}$, and get a new transformation,
denoted $\mathcal{BA}$. This corresponds to connecting the two associated
devices in sequence:\[ \begin{aligned}\Qcircuit @C=1em @R=.7em @!R { & \qw \poloFantasmaCn{\rA} &\gate{\cA} &\qw \poloFantasmaCn{\rB} &\gate{\cB}&\qw \poloFantasmaCn{\rC} &\qw}\end{aligned}~. \]Similarly,
two transformations $\mathcal{A}$ and $\mathcal{B}$ can be applied
independently, at the same time, on different systems. In this case,
the resulting transformation is denoted by $\mathcal{A}\otimes\mathcal{B}$,
and the two associated devices are composed in parallel:\[ \begin{aligned}\Qcircuit @C=1em @R=.7em @!R { & \qw \poloFantasmaCn{\rA} &\gate{\cA} &\qw \poloFantasmaCn{\rB} &\qw \\  & \qw \poloFantasmaCn{\rC} &\gate{\cB}&\qw \poloFantasmaCn{\rD} &\qw}\end{aligned}~. \]One
can build arbitrary circuits by connecting these devices, such as\[
\begin{aligned}\Qcircuit @C=1em @R=.7em @!R {&&\qw&\qw&\multigate{1}{} &\gate{} &\multimeasureD{1}{} \\& & \gate{}   & \multigate{1}{} &\ghost{} &\qw &\ghost{}\\ &\prepareC{} &\qw &\ghost{} &\qw &\gate{}&\qw}\end{aligned}~.
\]This can be read at the same time as an instruction about how to build
an actual experiment, and as the way physical processes are connected
in the same experiment. This framework allows one to treat states,
effects, and transformations on equal footing, by introducing a special
system, trivial system $\mathrm{I}$, which represents the lack of
a system. For this reason, the composition of system $\mathrm{A}$
with the trivial system does not involve any change: $\mathrm{AI}=\mathrm{IA}=\mathrm{A}$.

In general, the experimenter does not have full control over the transformation
they can implement; this is because in nature there are also non-deterministic
processes. Therefore, what we can say is that every device in an experiment
implements a collection of mutually exclusive alternatives. Only one
of them can occur in a run of the experiment, and the experimenter
can read which process actually occurred by looking at the outcome
of the experiment. For this reason, we can associate a collection
of transformations $\left\{ \mathcal{T}_{i}\right\} _{i\in\mathsf{X}}$,
called \emph{test}, with every device, where $i$ is the outcome,
and $\mathsf{X}$ the set of outcomes. A special kind of test are
\emph{measurements} (or observation-tests) $\left\{ a_{i}\right\} _{i\in\mathsf{X}}$,
which are collections of effects. It is therefore natural to ask ourselves
about the probability that a particular transformation occurs in an
experiment. Probabilities are represented by circuits with no external
wires, such as\[\begin{aligned}\Qcircuit @C=1em @R=.7em @!R { & \multiprepareC{1}{\rho_i} & \qw \poloFantasmaCn{\rA} & \gate{\cA_j} & \qw \poloFantasmaCn{\rA'} & \gate{\cA'_k} & \qw \poloFantasmaCn{\rA''} &\measureD{a_l} \\ & \pureghost{\rho_i} & \qw \poloFantasmaCn{\rB} & \gate{\cB_m} & \qw \poloFantasmaCn{\rB'} &\qw &\qw &\measureD{b_n} }\end{aligned}~. \]This
circuit represents the joint probability $p_{ijklmn}$ to observe
all these specific transformations in the experiment. If a test is
deterministic, i.e. there is only one transformation associated with
it, it is called \emph{channel}.

We will often make use of the following short-hand notations, inspired
by quantum theory, to mean some common diagrams occurring in our analysis.
\begin{enumerate}
\item \[ \left(a\middle|\rho\right)~:=\!\!\!\!\begin{aligned}\Qcircuit @C=1em @R=.7em @!R { & \prepareC{\rho} & \qw \poloFantasmaCn{\rA} &\measureD{a}}\end{aligned}~; \]
\item \[ \left(a\middle|\cC\middle|\rho\right)~:=\!\!\!\!\begin{aligned}\Qcircuit @C=1em @R=.7em @!R { & \prepareC{\rho} & \qw \poloFantasmaCn{\rA} &\gate{\cC} &\qw \poloFantasmaCn{\rB} &\measureD{a}}\end{aligned}~; \]
\item \label{enu:measure and prepare}\[ \left|\rho\right)\left(a\right|~:=~\begin{aligned}\Qcircuit @C=1em @R=.7em @!R { & \qw \poloFantasmaCn{\rA} &\measureD{a}&\prepareC{\rho}&\qw \poloFantasmaCn{\rB} &\qw}\end{aligned}~. \]
\end{enumerate}
In particular, the transformation represented in~\ref{enu:measure and prepare}
is called a \emph{measure-and-prepare} transformation, because first
the effect $a$ (representing a measurement outcome) occurs, and then
the state $\rho$ is prepared.

Now, for any state $\rho$ and every effect $a$, $\left(a\middle|\rho\right)\in\left[0,1\right]$,
whereby a state of system $\mathrm{A}$ becomes a map from the set
of effects $\mathsf{Eff}\left(\mathrm{A}\right)$ of $\mathrm{A}$
to the unit interval $\left[0,1\right]$: $\rho:\mathsf{Eff}\left(\mathrm{A}\right)\rightarrow\left[0,1\right]$.
This leads naturally to the following definition
\begin{defn}
\label{def:tomographically-distinct}Two states $\rho$ and $\sigma$
on the same system are \emph{tomographically distinct} if there exists
an effect $a$ such that $\left(a\middle|\rho\right)\neq\left(a\middle|\sigma\right)$.
\end{defn}
The idea behind this definition is that two states (i.e.\ the associated
preparations of a system) are indistinguishable if there is no measurement
to witness their difference. In this case they must be identified
as states.

Similarly, every effect gives rise to a map from the set of states
$\mathsf{St}\left(\mathrm{A}\right)$ to the unit interval, and one
identifies effects that produce the same probabilities on all states.

As states and effects are maps to a subset of real numbers, they can
be summed, and one can take their multiple by a real number. In this
way, the set of states and the set of effects become spanning sets
of \emph{real} vector spaces, denoted $\mathsf{St}_{\mathbb{R}}\left(\mathrm{A}\right)$
and $\mathsf{Eff}_{\mathbb{R}}\left(\mathrm{A}\right)$, which we
assume to be finite-dimensional. If one considers only linear combinations
with non-negative coefficients (conical combinations), one obtains
the \emph{cone} of states $\mathsf{St}_{+}\left(\mathrm{A}\right)$
and the cone of effects $\mathsf{Eff}_{+}\left(\mathrm{A}\right)$.
In this way, one can see how the convex geometric approach to GPTs
arises \citep{Barrett,Barnum2016}. Once the cone of states is defined,
one can consider the \emph{dual cone} $\mathsf{St}_{+}^{*}\left(\mathrm{A}\right)$:
this is the cone of linear functionals that are non-negative on $\mathsf{St}_{+}\left(\mathrm{A}\right)$.
Clearly $\mathsf{Eff}_{+}\left(\mathrm{A}\right)\subseteq\mathsf{St}_{+}^{*}\left(\mathrm{A}\right)$,
but we will discuss this inclusion in greater detail in appendix~\ref{sec:Classicality},
to examine its consequences for the emergence of classicality.

In this setting, a transformation from $\mathrm{A}$ to $\mathrm{B}$
is a completely positive map from $\mathsf{St}_{\mathbb{R}}\left(\mathrm{A}\right)$
to $\mathsf{St}_{\mathbb{R}}\left(\mathrm{B}\right)$. Here a \emph{positive}
map is a map sending an element of the input cone of states to an
element of the output cone of states. A map $\mathcal{T}$ is \emph{completely
positive} if $\mathcal{T}\otimes\mathcal{I}_{\mathrm{S}}$ is a positive
map, for any system $\mathrm{S}$, where $\mathcal{I}_{\mathrm{S}}$
is the identity on system $\mathrm{S}$. Complete positivity plays
a crucial role in defining tomographically distinct transformations
\citep{Chiribella-purification,QuantumFromPrinciples}.
\begin{defn}
Two transformations $\mathcal{A}$ and $\mathcal{B}$ from system
$\mathrm{A}$ to system $\mathrm{B}$ are \emph{tomographically distinct}
if there exists a system $\mathrm{S}$ and a bipartite state $\rho\in\mathsf{St}\left(\mathrm{AS}\right)$
such that\[ \begin{aligned}\Qcircuit @C=1em @R=.7em @!R { & \multiprepareC{1}{\rho} & \qw \poloFantasmaCn{\rA} & \gate{\cA} & \qw \poloFantasmaCn{\rB} & \qw \\ & \pureghost{\rho} & \qw \poloFantasmaCn{\rS} & \qw &\qw &\qw }\end{aligned}~\neq\!\!\!\!\begin{aligned}\Qcircuit @C=1em @R=.7em @!R { & \multiprepareC{1}{\rho} & \qw \poloFantasmaCn{\rA} & \gate{\cB} & \qw \poloFantasmaCn{\rB} & \qw \\ & \pureghost{\rho} & \qw \poloFantasmaCn{\rS} & \qw &\qw &\qw }\end{aligned}~. \]
\end{defn}
If the theory satisfies an axiom called Local Tomography \citep{Araki-local-tomography,Bergia-local-tomography,Hardy-informational-1,Chiribella-purification,masanes,Hardy-informational-2},
by which product effects are enough to do tomography on bipartite
states, as in quantum theory, then the ancillary system $\mathrm{S}$
is not necessary to distinguish transformations \citep{Chiribella-purification}.

The linear structure introduced above allows us to talk about the
coarse-graining of tests. Suppose one has the test $\left\{ \mathcal{A}_{i}\right\} _{i\in\mathsf{X}}$.
The action of coarse-graining means joining together some of the outcomes
of this test to build a different test. This concept is easily explained
by fig.~\ref{fig:coarse-graining}.
\begin{figure}
\begin{centering}
\includegraphics[scale=0.7]{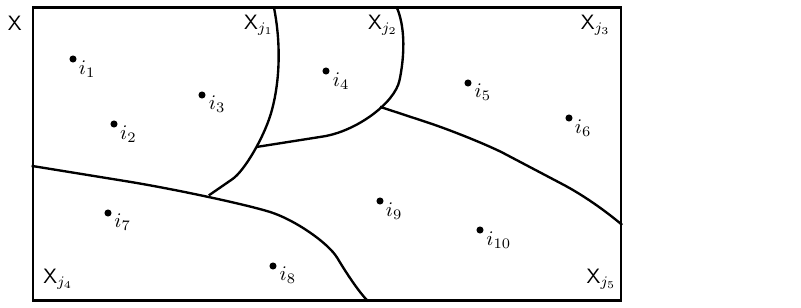}
\par\end{centering}
\caption{\label{fig:coarse-graining}The outcome set $\mathsf{X}$ of the test
$\left\{ \mathcal{A}_{i}\right\} _{i\in\mathsf{X}}$ has 10 outcomes.
To perform a coarse-graining of it, we lump together some of its outcomes,
relabeling them as a new outcome. For example, the outcomes $i_{1}$,
$i_{2}$, and $i_{3}$ are relabeled as $j_{1}$. This gives rise
to a partition $\left\{ \mathsf{X}_{j}\right\} _{j\in\mathsf{Y}}$
of $\mathsf{X}$. We associate a new transformation with each set
in the partition, such that it is the sum of the transformations associated
with the outcomes contained in that set. Thus $\mathcal{B}_{j_{1}}=\mathcal{A}_{i_{1}}+\mathcal{A}_{i_{2}}+\mathcal{A}_{i_{3}}$.
The new test $\left\{ \mathcal{B}_{j}\right\} _{j\in\mathsf{Y}}$
has 5 outcomes.}
\end{figure}
Formally, a test $\left\{ \mathcal{B}_{j}\right\} _{j\in\mathsf{Y}}$
is a \emph{coarse-graining} of the test $\left\{ \mathcal{A}_{i}\right\} _{i\in\mathsf{X}}$
if there exists a partition $\left\{ \mathsf{X}_{j}\right\} _{j\in\mathsf{Y}}$
of $\mathsf{X}$ such that $\mathcal{B}_{j}=\sum_{i\in\mathsf{X}_{j}}\mathcal{A}_{i}$.
In this case we say that $\left\{ \mathcal{A}_{i}\right\} _{i\in\mathsf{X}}$
is a \emph{refinement} of $\left\{ \mathcal{B}_{j}\right\} _{j\in\mathsf{Y}}$.
We also say that every transformation $\mathcal{A}_{i}$, with $i\in\mathsf{X}_{j}$,
is a refinement of the transformation $\mathcal{B}_{j}$. Clearly,
by performing the coarse-graining over all the outcomes of a test,
we obtain a deterministic test, i.e.\ a channel: $\mathcal{A}=\sum_{i\in\mathsf{X}}\mathcal{A}_{i}$.

The natural question at this stage is to understand when a transformation
$\mathcal{T}$ is of ``primitive nature'', or instead arises from
the coarse-graining of other transformations in some experiment.
\begin{defn}
A transformation $\mathcal{T}$ is \emph{pure} if \emph{all} its refinements
are of the form $p_{i}\mathcal{T}$, where $\left\{ p_{i}\right\} $
is a probability distribution. A non-pure transformation is called
\emph{mixed}.
\end{defn}

\subsection{Causality and its consequences\label{subsec:Consequences-of-Causality}}

In our research, the main requirement we impose on a physical theory
is that the propagation of information follows a ``temporal order'',
therefore the result of a process can influence a future process,
but never a process in the past. Causal theories \citep{Chiribella-purification}
are those that satisfy this requirement, which is expressed precisely
by the following axiom:
\begin{ax}[Causality \citep{Chiribella-purification}]
For every state $\rho$, take two measurements $\left\{ a_{i}\right\} _{i\in\mathsf{X}}$
and $\left\{ b_{j}\right\} _{j\in\mathsf{Y}}$. One has
\[
\sum_{i\in\mathsf{X}}\left(a_{i}\middle|\rho\right)=\sum_{j\in\mathsf{Y}}\left(b_{j}\middle|\rho\right).
\]
\end{ax}
Causality is also equivalent to the existence of a unique deterministic
effect $u$ \citep{Chiribella-purification}. We can use this deterministic
effect to discard systems when dealing with composite systems. The
marginal of a bipartite state can be defined as:
\[
\rho_{\mathrm{A}}=\mathrm{tr}_{\mathrm{B}}\rho_{\mathrm{AB}}:=\left(\mathcal{I}_{\mathrm{A}}\otimes u_{\mathrm{B}}\right)\rho_{\mathrm{AB}},
\]
where $\mathcal{I}_{\mathrm{A}}$ denotes the identity channel on
system $\mathrm{A}$. Sometimes we will keep $\mathrm{tr}$ as a notation
for the unique deterministic effect when it is applied directly to
states.

In causal theories, the set of states $\mathsf{St}\left(\mathrm{A}\right)$
of a system $\mathrm{A}$ has a particular structure: it can be divided
in two disjoint subsets, the set for which $\left(u\middle|\rho\right)=1$,
and the set for which $\left(u\middle|\rho\right)<1$. The former
is called the set of \emph{normalized} states, and corresponds to
states that can be prepared deterministically. States of the latter
form are called sub-normalized states, which are equivalent to a state
that can be prepared only probabilistically, where $\left(u\middle|\rho\right)$
gives exactly that probability. In causal theories, we can always
write a sub-normalized state as a probabilistic rescaling of a normalized
state, therefore it is enough for our analysis to consider only state
spaces of normalized states.

This gives rise naturally to the notion of a cone of states, denoted
by $\mathsf{St}_{+}\left(\mathrm{A}\right)$. A geometric picture
of this is given in fig.~\ref{fig:causal cone}.
\begin{figure}
\begin{centering}
\includegraphics[viewport=0bp 0bp 342bp 139bp]{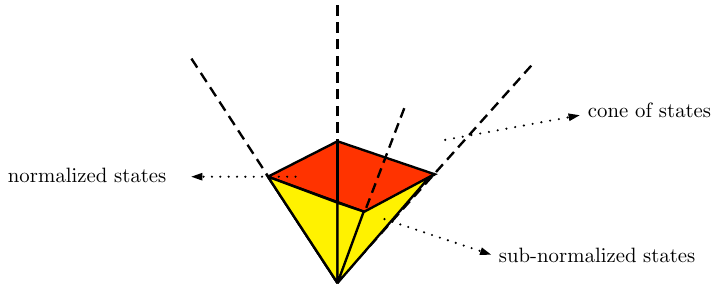}
\par\end{centering}
\caption{\label{fig:causal cone}The cone of states of a causal theory. The
set of normalized states (in orange) is given by the intersection
of the hyperplane defined by $\left(u\middle|\rho\right)=1$ with
the cone of states $\mathsf{St}_{+}\left(\mathrm{A}\right)$ . Below
that hyperplane are the sub-normalized states (in yellow). The colored
part in the cone of states is the set of states $\mathsf{St}\left(\mathrm{A}\right)$.
The white part corresponds to super-normalized elements of $\mathsf{St}_{+}\left(\mathrm{A}\right)$,
which are non-physical.}
\end{figure}

It is possible to show that in a causal theory where probabilities
in the whole range $\left[0,1\right]$ are allowed, the state space
is a \emph{compact convex set} \citep{Chiribella-purification,QuantumFromPrinciples,Diagrammatic}.
Note that given a measurement $\left\{ a_{i}\right\} _{i\in\mathsf{X}}$,
we have $\sum_{i\in\mathsf{X}}a_{i}=u$, because the sum means a coarse-graining
over all the effects of the measurement, and therefore it must be
the unique deterministic effect. Hence when we apply that measurement
on a state $\rho$, it yields a probability distribution $\left\{ p_{i}\right\} _{i\in\mathsf{X}}$,
where $p_{i}:=\left(a_{i}\middle|\rho\right)$. This shows that every
state is a probabilistic assignment to any measurement, therefore
recovering the usual picture of states in the convex approach to GPTs
\citep{Barrett,General-no-broadcasting,Barnum2016}. Similarly, it
is easy to prove that channels preserve the deterministic effect:
$u\mathcal{C}=u$. This is the generalization of the quantum property
that channels are trace-preserving. In particular, if we have a test
$\left\{ \mathcal{A}_{i}\right\} _{i\in\mathsf{X}}$, one has $\sum_{i\in\mathsf{X}}u\mathcal{A}_{i}=u$
\citep{Chiribella-purification}.

If in a theory information flows from the past to the future, like
in causal theories, it is possible to choose what experiment to perform
now based on the outcome of a previous one \citep{Chiribella-purification,QuantumFromPrinciples}.
This fact is so strongly linked to Causality that the ability to perform
all classically-controlled experiments (i.e.\ chosen according to
the outcome of a previous experiment) is equivalent to Causality itself
\citep{QuantumFromPrinciples,Diagrammatic}. A particular example
of a classically-controlled test is a measure-and-prepare test.
\begin{defn}
\label{def:measure and prepare}A test $\left\{ \mathcal{A}_{i}\right\} _{i\in\mathsf{X}}$
is \emph{measure-and-prepare} if $\mathcal{A}_{i}=\left|\rho_{i}\right)\left(a_{i}\right|$
for some measurement $\left\{ a_{i}\right\} _{i\in\mathsf{X}}$. The
channel $\mathcal{A}=\sum_{i\in\mathsf{X}}\left|\rho_{i}\right)\left(a_{i}\right|$
is said to be \emph{measure-and-prepare} as well.
\end{defn}
A measure-and-prepare test is a classically-controlled test, because
the state $\rho_{i}$ is prepared if the effect $a_{i}$ happens before.
Note that the channel $\mathcal{A}$ is the complete coarse-graining
over the outcomes of the measure-and-prepare test $\left\{ \mathcal{A}_{i}\right\} _{i\in\mathsf{X}}$.

\section{Classicality\label{sec:Classicality}}

In this appendix we collect interesting and useful facts about classicality
in GPTs. We start from some of the fundamental properties of classical
states. Geometrically, the state space is a simplex,
\begin{figure}
\begin{centering}
\includegraphics[scale=0.7]{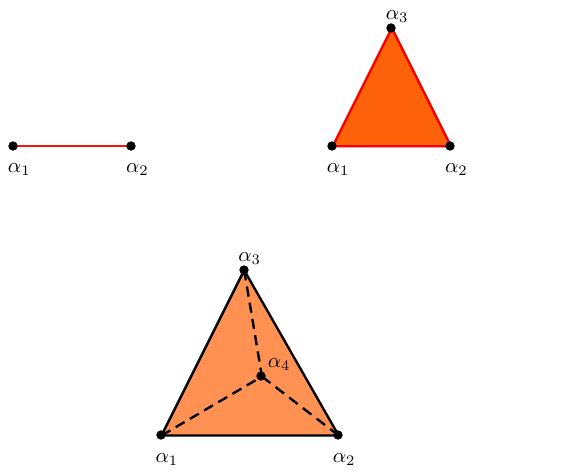}
\par\end{centering}
\caption{\label{fig:simplices}Simplices up to dimension 3. The vertices are
the pure states, namely point-like probability distributions $\alpha_{i}$.
Every other state can be obtained as a convex combination of the vertices.}
\end{figure}
 with all point-like probability distributions ($\left(\begin{array}{cccc}
1 & 0 & \ldots & 0\end{array}\right)^{T}$ and permutations) as vertices (fig.~\ref{fig:simplices}), and the
unique deterministic effect is the row-vector $u=\left(\begin{array}{ccc}
1 & \ldots & 1\end{array}\right)$. Now, we explain how classical theory can be singled out among all
other causal theories. It turns out that its key feature is that all
pure states are jointly perfectly distinguishable. This means that
there exists a measurement that distinguishes them perfectly in a
single shot\textcolor{blue}{, }as explained in section~\textcolor{blue}{\ref{sec:Classical-sub-theories}.}
\begin{prop}
If \emph{all} pure states of a causal theory are perfectly distinguishable,
the theory is classical.
\end{prop}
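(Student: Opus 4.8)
The plan is to reconstruct finite-dimensional classical theory from its two defining data — a simplex of states and the \emph{full} set of $[0,1]$-valued effects — by showing that the distinguishability hypothesis forces both. So I would split the argument into a ``kinematic'' half (the state space is a simplex) and a harder ``effects'' half (every admissible functional is physical).

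First I would extract the combinatorial content of the hypothesis. By the definition of a distinguishable set, ``all pure states are distinguishable'' supplies a single measurement $\{a_i\}_{i=1}^{d}$, in bijection with the pure states, obeying $\left(a_i\middle|\alpha_j\right)=\delta_{ij}$; in particular this forces there to be only finitely many pure states $\alpha_1,\dots,\alpha_d$. Since the $a_i$ form a measurement, coarse-graining over all outcomes gives $\sum_i a_i = u$, the unique deterministic effect guaranteed by Causality. I would then note that the $a_i$ act as coordinate functionals: applying $a_k$ to any relation $\sum_j c_j\alpha_j=0$ yields $c_k=0$, so the $\alpha_i$ are linearly (hence, lying on the hyperplane $\left(u\middle|\cdot\right)=1$, affinely) independent and span $\mathsf{St}_{\mathbb{R}}(\mathrm{A})$. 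Because the state space is compact and convex, it equals the convex hull of its extreme points, i.e.\ of the pure states $\{\alpha_i\}$; being the convex hull of affinely independent points, it is a simplex, and every state decomposes uniquely as $\rho=\sum_i p_i\,\alpha_i$ with $p_i=\left(a_i\middle|\rho\right)$.

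The step I expect to carry the real content is recovering \emph{all} classical effects, not merely the sharp ones. Coarse-graining $\{a_i\}$ over a subset $S$ produces the physical effect $\sum_{i\in S}a_i$, and these subset sums (including $0$ and $u$) are exactly the $0/1$ vertices of the cube $[0,1]^d$ in the $\{a_i\}$-basis. Since the theory is convex — Causality together with the standing assumption that all probabilities are allowed — the physical effects are closed under mixing, so their convex set contains the whole cube $[0,1]^d$. Conversely, any physical effect $e$ satisfies $\left(e\middle|\rho\right)\in[0,1]$ on every state, which by the simplex decomposition forces $\left(e\middle|\alpha_i\right)\in[0,1]$ and $e=\sum_i\left(e\middle|\alpha_i\right)a_i$, placing $e$ back inside that cube. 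Hence the physical effects coincide with all functionals taking values in $[0,1]$ on the simplex, which is precisely the effect set of classical theory; together with the simplex state space this identifies the theory as finite-dimensional classical theory.

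The main obstacle is this last step. Distinguishability alone only hands over the sharp effects $a_i$ and their subset sums, so it is convexity — closure under mixing, with the $0$ and $u$ effects as the two distinguished cube-vertices — that must be invoked to fill in all the ``fuzzy'' effects. This is exactly what rules out an effect-restricted theory with a simplex state space, such as the restricted trit, and it is consistent that in those theories the pure states are precisely the ones that fail to be distinguishable.
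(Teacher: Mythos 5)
Your proof is correct, and its first (kinematic) half is essentially the paper's own argument: the distinguishing measurement $\left\{ a_{i}\right\} $ acts as a dual basis, forcing linear independence of the pure states $\alpha_{i}$ and hence a simplicial state space; you are in fact slightly more careful than the paper, which leaves the Krein--Milman step (that the convex hull of the pure states exhausts the state space) implicit. The effects half is where you take a genuinely different route. The paper works at the level of \emph{cones}: it shows the $a_{i}$ are linearly independent and pure, identifies the cone they generate with the dual cone $\mathsf{St}_{+}^{*}\left(\mathrm{A}\right)$ (any $f\in\mathsf{St}_{+}^{*}\left(\mathrm{A}\right)$ equals $\sum_{i}\left(f\middle|\alpha_{i}\right)a_{i}$ with non-negative coefficients), and concludes $\mathsf{Eff}_{+}\left(\mathrm{A}\right)=\mathsf{St}_{+}^{*}\left(\mathrm{A}\right)$, i.e.\ the no-restriction hypothesis holds, so the theory is classical. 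You instead pin down the effect \emph{set} directly: coarse-graining yields the subset sums $\sum_{i\in S}a_{i}$ (the $0/1$ vertices of the cube), convexity of the theory fills in the whole cube, and the dual-basis decomposition $e=\sum_{i}\left(e\middle|\alpha_{i}\right)a_{i}$ gives the reverse inclusion. This buys something real: every fuzzy classical effect is exhibited as an explicit randomization of coarse-grainings of the single distinguishing measurement, and you sidestep the passage (implicit in the paper) from equality of cones to equality of effect sets, which is precisely the kind of distinction the restricted-trit example warns about; the paper's version, in turn, is shorter and ties the proposition to its no-restriction discussion. One point you should make explicit: the convex hull of the \emph{non-empty} subset sums contains only cube points whose coordinate sum is at least $1$, so reaching sub-normalized effects such as $\frac{1}{2}a_{1}$ requires the empty-subset vertex, i.e.\ the zero effect, to be physical. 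This is harmless---causal convex theories admit randomization with probability zero, equivalently probabilistic relabeling of outcomes---but as written your cube-filling step uses it silently.
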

\begin{proof}
Suppose we have $n$ pure states $\left\{ \psi_{i}\right\} _{i=1}^{n}$
of some system $\mathrm{A}$. Requiring that they be perfectly distinguishable
implies that they are linearly independent as vectors in $\mathsf{St}_{\mathbb{R}}\left(\mathrm{A}\right)$.
To see it, let $\left\{ a_{i}\right\} _{i=1}^{n}$ be the associated
measurement. Then if we consider $\sum_{i=1}^{n}\lambda_{i}\psi_{i}=0$,
where $\lambda_{i}\in\mathbb{R}$, and we apply $a_{j}$ to both sides,
we get $\lambda_{j}=0$, for all $j=1,\ldots,n$. Being $n$ linearly
independent vectors, they span an $n$-dimensional vector space. This
means that the state space is a simplex. Now, let us examine the set
of effects. By a similar argument, the effects $\left\{ a_{i}\right\} _{i=1}^{n}$
are linearly independent. Let us also show that these effects are
pure. Suppose by contradiction that a generic $a_{i}$ is \emph{not}
pure, i.e.\ $a_{i}=\sum_{k}e_{k,i}$. Taking $j\neq i$, the fact
that $\left(a_{i}\middle|\psi_{j}\right)=0$ implies that $\left(e_{k,i}\middle|\psi_{j}\right)=0$.
Since effects are completely defined by their action on a basis of
the vector space of states, $e_{k,i}=\lambda_{k,i}a_{i}$, where $\lambda_{k,i}$
is a non-negative number. The condition $a_{i}=\sum_{k}e_{k,i}$ for
every $i$ implies that $\left\{ \lambda_{k,i}\right\} $ is a probability
distribution. This shows that every $a_{i}$ is a pure effect. Hence
the cone spanned by the $\left\{ a_{i}\right\} _{i=1}^{n}$, which
coincides with the dual cone $\mathsf{St}_{+}^{*}\left(\mathrm{A}\right)$,
is the whole effect cone $\mathsf{Eff}_{+}\left(\mathrm{A}\right)$.
In this case all allowed mathematical effects are physical too, so
there is no restriction on the effects. This is therefore classical
theory.
\end{proof}
This motivates the choice of classical states as the convex hull of
a maximal set of perfectly distinguishable pure states. Considering
only pure states is not restrictive. Indeed, suppose a causal theory
has $n$ perfectly distinguishable mixed states $\left\{ \rho_{i}\right\} _{i=1}^{n}$,
which are distinguished perfectly by the measurement $\left\{ a_{i}\right\} _{i=1}^{n}$.
Then, for every mixed state $\rho_{i}$, we can find a pure state
$\alpha_{i}$ such that $\rho_{i}=p_{i}\alpha_{i}+\left(1-p_{i}\right)\sigma_{i}$,
where $p_{i}\in\left(0,1\right)$, and $\sigma_{i}$ is another state,
for every $i=1,\ldots,n$. Then, for $j\neq i$, the fact that $\left(a_{j}\middle|\rho_{i}\right)=0$
implies that $p_{i}\left(a_{j}\middle|\alpha_{i}\right)+\left(1-p_{i}\right)\left(a_{j}\middle|\sigma_{i}\right)=0$.
Since all terms are non-negative, the only possibility is that $\left(a_{j}\middle|\alpha_{i}\right)=0$
(and $\left(a_{j}\middle|\sigma_{i}\right)=0$). Additionally, since
$\left(a_{i}\middle|\rho_{i}\right)=1$, we have $p_{i}\left(a_{i}\middle|\alpha_{i}\right)+\left(1-p_{i}\right)\left(a_{i}\middle|\sigma_{i}\right)=1$.
This convex combination of non-negative real numbers less than or
equal to 1 can attain its maximum 1 if and only if $\left(a_{i}\middle|\alpha_{i}\right)=1$
(and $\left(a_{i}\middle|\sigma_{i}\right)=1$). Therefore, the pure
states$\left\{ \alpha_{i}\right\} _{i=1}^{n}$ are perfectly distinguished
by the same measurement $\left\{ a_{i}\right\} _{i=1}^{n}$ that distinguishes
the mixed states $\left\{ \rho_{i}\right\} _{i=1}^{n}$, as $\left(a_{j}\middle|\alpha_{i}\right)=\delta_{ij}$.
This shows us that in any physical theory with perfectly distinguishable
states, it is always possible to pick perfectly distinguishable pure
states to construct classical sub-theories.\textcolor{blue}{}

\subsection{Effects for a classical sub-theory\label{subsec:Restricting-effects}}

Once we pick a classical set $\boldsymbol{\alpha}$, which is the
state space of a classical sub-theory of a given theory, we must find
what effects to consider. Indeed, the counterexample of the restricted
trit in appendix~\ref{sec:restricted trit} will show that the choice
of effects can have dramatic consequences for the structure of a theory.
Even if the state space looks classical, as for the restricted trit,
the theory can be very different from classical theory.

Given a classical set $\boldsymbol{\alpha}$, the natural way to assign
effects to this classical sub-theory is to restrict the effects of
the original theory to the set $\boldsymbol{\alpha}$, identifying
those that are not tomographically distinct on $\boldsymbol{\alpha}$.
More precisely, let us introduce the following equivalence relation
on the original set of effects $\mathsf{Eff}\left(\mathrm{A}\right)$:
$e\sim_{\boldsymbol{\alpha}}f$ if $\left(e\middle|\gamma\right)=\left(f\middle|\gamma\right)$
for every classical state $\gamma$ in $\boldsymbol{\alpha}$. The
set of effects of the classical sub-theory is the set of equivalence
classes $\mathsf{Eff}\left(\mathrm{A}\right)/\boldsymbol{\alpha}:=\mathsf{Eff}\left(\mathrm{A}\right)/\sim_{\boldsymbol{\alpha}}$. 

We need to show that this restricted set of effects $\mathsf{Eff}\left(\mathrm{A}\right)/\boldsymbol{\alpha}$
is actually the set of effects of some classical theory. Recall that
in classical theory, every element in the cone of effects arises as
a conical combination of the effects that distinguish the pure states
perfectly. In our setting, this means checking that every element
of $\mathsf{Eff}_{+}\left(\mathrm{A}\right)/\boldsymbol{\alpha}$
arises as a conical combination of the equivalence classes $\left[a_{i}\right]$
of the effects that distinguish the pure states $\alpha_{i}$ in $\boldsymbol{\alpha}$.
Note that it is not hard to see that $\mathsf{Eff}_{+}\left(\mathrm{A}\right)/\boldsymbol{\alpha}$
is still a cone, with the sum and the multiplication by a scalar inherited
from $\mathsf{Eff}_{+}\left(\mathrm{A}\right)$. Consider a generic
element $\xi$ in $\mathsf{Eff}_{+}\left(\mathrm{A}\right)$, and
let us show that it is in the same equivalence class as $\xi'=\sum_{i=1}^{d}\lambda_{i}a_{i}$,
where $\lambda_{i}=\left(\xi\middle|\alpha_{i}\right)$, for all $i$.
By linearity, to check the equivalence of two elements of $\mathsf{Eff}_{+}\left(\mathrm{A}\right)$,
it is enough to check that they produce the same numbers when applied
to all pure states $\alpha_{j}$. Now,
\[
\left(\xi'\middle|\alpha_{j}\right)=\sum_{i=1}^{d}\lambda_{i}\left(a_{i}\middle|\alpha_{j}\right)=\lambda_{j}=\left(\xi\middle|\alpha_{j}\right).
\]
This shows that the effect cone $\mathsf{Eff}_{+}\left(\mathrm{A}\right)/\boldsymbol{\alpha}$
of the sub-theory is actually a classical effect cone, generated by
the effects that perfectly distinguish the pure states in $\boldsymbol{\alpha}$.

\subsection{Classical sub-theories from the no-restriction hypothesis\label{subsec:Classical-no-restriction}}

In appendix~\ref{sec:General-probabilistic-theories}, we saw how
one can define the cone of states $\mathsf{St}_{+}\left(\mathrm{A}\right)$
and its dual $\mathsf{St}_{+}^{*}\left(\mathrm{A}\right)$, given
by linear functionals that are non-negative on $\mathsf{St}_{+}\left(\mathrm{A}\right)$.
We also saw that one can define the cone of effects $\mathsf{Eff}_{+}\left(\mathrm{A}\right)$,
generated by conical combinations of effects. Clearly, all the elements
of $\mathsf{Eff}_{+}\left(\mathrm{A}\right)$ are linear functionals
that yield a non-negative number when applied to elements of $\mathsf{St}_{+}\left(\mathrm{A}\right)$.
Therefore, one has $\mathsf{Eff}_{+}\left(\mathrm{A}\right)\subseteq\mathsf{St}_{+}^{*}\left(\mathrm{A}\right)$.
It is interesting to study when one has the equality in this inclusion.
\begin{condition}[No-restriction hypothesis \citep{Chiribella-purification}]
We say that a theory is \emph{non-restricted}, or that it satisfies
the no-restriction hypothesis, if $\mathsf{Eff}_{+}\left(\mathrm{A}\right)=\mathsf{St}_{+}^{*}\left(\mathrm{A}\right)$
for every system.
\end{condition}
While this may look like just a statement of mathematical interest,
it has some important physical implications. Consider the subset of
$\mathsf{St}_{+}^{*}\left(\mathrm{A}\right)$ made of linear functionals
$f$ such that $\left(f\middle|\rho\right)\in\left[0,1\right]$ for
all states $\rho$. In a non-restricted theory, these elements $f$
are also valid effects. In other words, the no-restriction hypothesis
states that every mathematically allowed effect is also a physical
effect. Clearly, the no-restriction hypothesis concerns more the mathematical
structure of the theory than its operational one. Indeed, it is the
duty of the physical theory to specify what objects are to be considered
physical effects, even if they are admissible in principle, based
on their mathematical properties. For this reason, the no-restriction
hypothesis has been questioned various times on the basis of its lack
of operational motivation \citep{Chiribella-purification,No-restriction,No-restriction2}.
Moreover, recently it has been show that theories with almost quantum
correlations \citep{Almost-quantum} violate the no-restriction hypothesis
\citep{Almost-no-restriction}.

Examples of theories that satisfy the no-restriction hypothesis are
classical and quantum theory. The theory of restricted trits in appendix~\ref{sec:restricted trit},
instead, explicitly violates it. This theory also has no classical
states (see appendix~\ref{sec:restricted trit}); this is not a coincidence,
for one of the most important consequences of the no-restriction hypothesis
is that a non-restricted theory always admits at least the classical
bit as a sub-theory.
\begin{prop}
In a non-restricted theory, for every pure state $\psi_{1}$ there
exists another pure state $\psi_{2}$ such that $\left\{ \psi_{1},\psi_{2}\right\} $
are perfectly distinguishable.
\end{prop}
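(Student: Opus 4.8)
The plan is to construct, for a given pure state $\psi_1$, a two-outcome measurement $\{a,u-a\}$ together with a second pure state $\psi_2$ such that $(a|\psi_1)=1$ and $(a|\psi_2)=0$; this is exactly what distinguishability of $\{\psi_1,\psi_2\}$ requires, since then $(u-a|\psi_1)=0$ and $(u-a|\psi_2)=1$ follow automatically from $(u|\rho)=1$ on normalized states. The whole burden therefore reduces to producing a single physical effect $a$ that takes the value $1$ on $\psi_1$ and vanishes on some \emph{other} pure state. Throughout I assume the system is nontrivial (its state space contains more than one point), otherwise the statement is vacuous.

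First I would build a suitable functional by separation. Since the theory is causal (with all probabilities allowed), the normalized state space $S$ is a compact convex subset of the hyperplane $(u|\cdot)=1$ in $\mathsf{St}_{\mathbb{R}}(\mathrm{A})$, and the pure state $\psi_1$ is an extreme point of $S$. As $S$ has more than one point, $\psi_1$ lies on the relative boundary of $S$, so the supporting hyperplane theorem yields a linear functional $g$ that is non-constant on $S$ and attains its maximum over $S$ at $\psi_1$. Writing $M=(g|\psi_1)=\max_{\rho\in S}(g|\rho)$ and $m=\min_{\rho\in S}(g|\rho)$ (both attained by compactness, with $M>m$), I set $a:=(g-m\,u)/(M-m)$. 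Using $(u|\rho)=1$ on normalized states one checks $(a|\rho)=((g|\rho)-m)/(M-m)\in[0,1]$ for every $\rho\in S$, with $(a|\psi_1)=1$.

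The key step---and the place where the hypothesis is indispensable---is to promote this $a$ from a mere mathematical functional to a genuine physical effect. By construction $a$ is non-negative on the state cone, hence $a\in\mathsf{St}_{+}^{*}(\mathrm{A})$; the no-restriction hypothesis $\mathsf{Eff}_{+}(\mathrm{A})=\mathsf{St}_{+}^{*}(\mathrm{A})$ then guarantees that $a$, being moreover bounded above by $u$, is an allowed effect, and likewise $u-a$. This is precisely the content we are entitled to invoke: in a non-restricted theory every functional taking values in $[0,1]$ on all states is a legitimate effect.

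Finally I would extract the second pure state from the zero set of $a$. The set $F=\{\rho\in S:(a|\rho)=0\}$ is nonempty (it is where $g$ attains its minimum $m$, reached by compactness) and is a face of $S$, because the minimizing set of a linear functional over a convex set is always a face. Being a nonempty compact convex set, $F$ has an extreme point $\psi_2$ by the Minkowski--Krein--Milman theorem, and an extreme point of a face of $S$ is an extreme point of $S$, i.e.\ a pure state. Then $(a|\psi_2)=0\neq1=(a|\psi_1)$, so $\psi_2\neq\psi_1$, and the measurement $\{a,u-a\}$ distinguishes $\{\psi_1,\psi_2\}$. The one point deserving care is ensuring that $g$ can be chosen non-constant on $S$ (equivalently $M>m$), which is exactly where extremality of $\psi_1$ and nontriviality of the system enter; granting that, the remaining arguments are routine convex geometry.
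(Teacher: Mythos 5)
Your proof is correct, and it shares the paper's overall strategy---separate $\psi_{1}$ by a supporting hyperplane, then use the no-restriction hypothesis to promote the separating functional to a physical effect---but the two constructions differ in both main steps. The paper supports the \emph{cone} $\mathsf{St}_{+}\left(\mathrm{A}\right)$ at $\psi_{1}$ by a hyperplane through the origin, which immediately yields $f\in\mathsf{St}_{+}^{*}\left(\mathrm{A}\right)$ with $\left(f\middle|\psi_{1}\right)=0$ (no shift by $u$ is needed), and then finds $\psi_{2}$ as a \emph{maximizer} of $f$: it takes a state $\rho^{*}$ attaining the maximum $\lambda^{*}>0$, refines it into pure states, and observes that one of them must also attain $\lambda^{*}$; the effect is $a_{2}=f/\lambda^{*}$, so the orientation is opposite to yours ($a_{2}$ vanishes on $\psi_{1}$ and equals $1$ on $\psi_{2}$). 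You instead support the \emph{normalized state set} at $\psi_{1}$; since that set sits inside the hyperplane $\left(u\middle|\cdot\right)=1$ and so has empty interior in $\mathsf{St}_{\mathbb{R}}\left(\mathrm{A}\right)$, this obliges you to invoke the relative (non-trivial) supporting hyperplane theorem---which you rightly flag as the delicate point---and to perform the affine renormalization $a=\left(g-mu\right)/\left(M-m\right)$ before no-restriction can be applied; you then extract $\psi_{2}$ as an extreme point of the face $\left\{ \rho:\left(a\middle|\rho\right)=0\right\} $ via Minkowski--Krein--Milman rather than by a refinement argument. The trade-offs are minor but real: the paper's cone-anchored hyperplane lands in the dual cone for free and uses the operational refinement structure native to GPTs, whereas your route is self-contained convex geometry and makes explicit two things the paper leaves implicit---the nontriviality assumption on the system (without which the statement is false), and the facial argument guaranteeing that the zero set of the effect actually contains a \emph{pure} state. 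Both proofs use the no-restriction hypothesis in exactly the same way, and both rest on compactness of the state space and on the identification of pure states with extreme points.
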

\begin{proof}
Let $\psi_{1}$ be a pure state. The proof will consist of some steps.
In the first step, let us prove that there exists a non-trivial element
$f$ of the dual cone $\mathsf{St}_{+}^{*}\left(\mathrm{A}\right)$
such that $\left(f\middle|\psi_{1}\right)=0$. Note that being pure,
$\psi_{1}$ lies in some supporting hyperplane through the origin
of the cone $\mathsf{St}_{+}\left(\mathrm{A}\right)$ \citep{Boyd}.
Such a hyperplane must have equation $\left(f\middle|x\right)=0$
for all $x\in\mathsf{St}_{\mathbb{R}}\left(\mathrm{A}\right)$, where
$f$ is some non-trivial linear functional on $\mathsf{St}_{\mathbb{R}}\left(\mathrm{A}\right)$,
otherwise it would not pass through the origin (i.e.\ the zero vector).
Being a supporting hyperplane, we can choose $f$ to be in the dual
cone $\mathsf{St}_{+}^{*}\left(\mathrm{A}\right)$ \citep{Boyd}.
Thus we have found $f\in\mathsf{St}_{+}^{*}\left(\mathrm{A}\right)$
such that $\left(f\middle|\psi_{1}\right)=0$.

Let us consider the maximum of $f$ on the state space. Since $f$
is continuous and the state space is compact, it achieves its maximum
$\lambda^{*}$ on some state $\rho^{*}$. Note that $\lambda^{*}>0$,
otherwise $f$ would be the zero functional. Let us show that the
maximum is attained on some pure state. If $\rho^{*}$ is already
a pure state, there is nothing to prove. If it is not, consider a
refinement of $\rho^{*}$ in terms of pure states, $\rho^{*}=\sum_{i}p_{i}\psi_{i}$,
where $\left\{ p_{i}\right\} $ is a probability distribution. Apply
$f$ to $\rho^{*}$: 
\[
\lambda^{*}=\left(f\middle|\rho^{*}\right)=\sum_{i}p_{i}\left(f\middle|\psi_{i}\right).
\]
Clearly $\lambda^{*}\leq\max\left(f\middle|\psi_{i}\right)$, but
being $\lambda^{*}$ the maximum of $f$ , in fact $\lambda^{*}=\max\left(f\middle|\psi_{i}\right)$.
This means that there is a pure state $\psi_{2}$, chosen among these
$\psi_{i}$'s, on which $f$ attains its maximum.

Now consider the functional $a_{2}:=\frac{f}{\lambda^{*}}$, which
takes values in the interval $\left[0,1\right]$ when applied to states.
Specifically $\left(a_{2}\middle|\psi_{2}\right)=1$ and $\left(a_{2}\middle|\psi_{1}\right)=0$.
By the no-restriction hypothesis, it is a valid effect, so we can
construct the measurement $\left\{ a_{1},a_{2}\right\} $, where $a_{1}:=u-a_{2}$,
which perfectly distinguishes between $\psi_{1}$ and $\psi_{2}$.
\end{proof}
The essence of this proposition is that in every non-restricted physical
theory there are at least two perfectly distinguishable pure states.
By possibly adding other pure states so that the overall set is perfectly
distinguishable, one can find a maximal set of perfectly distinguishable
pure states. In this way one can always construct a classical set
for every system, of dimension at least 2.

Even though the no-restriction hypothesis guarantees the existence
of classical sets, we do not wish to assume it for its lack of operational
motivation, preferring to stick to condition~\ref{cond:ECC}, which
is agnostic about the reason why classical states arise in a theory.

\section{A theory with no classical states\label{sec:restricted trit}}

In this appendix we present a theory from which classical theory cannot
emerge in any way, neither through decoherence, nor by taking arbitrarily
large systems. This is the theory of \emph{restricted trits}, which
describes a classical trit (and its composites) when we fundamentally
restrict its possible measurements. Remarkably, the effect of this
restriction is that the theory contains \emph{no classical states,
nor effective approximations of them}.

To construct this theory, we start from the state space of the classical
trit, represented in fig.~\ref{fig:The-state-space}, with pure states
$\alpha_{1}$, $\alpha_{2}$, $\alpha_{3}$. Let $\left\{ a_{1},a_{2},a_{3}\right\} $
be the measurement that perfectly distinguishes them in a single shot:
$\left(a_{i}\middle|\alpha_{j}\right)=\delta_{ij}$. Instead of allowing
the full set of effects of classical theory, suppose that, for some
reason, the most fine-grained effects that are allowed are $e_{ij}=\frac{1}{2}\left(a_{i}+a_{j}\right)$,
with $i<j$. A section of the dual cone (the same as the effect cone
of classical theory), and of the effect cone of the restricted trit
is represented in fig.~\ref{fig:A-section-of} in the main text.

Since we have a smaller set of effects than the original classical
trit, we must check what happens to the state space. Indeed it may
happen that two states become tomographically indistinguishable because
there are not enough effects to witness their difference (cf.\ definition~\ref{def:tomographically-distinct}).
However, this is not the case for the restricted trit, because the
effects $e_{ij}$ are linearly independent. As such, they span exactly
the same effect vector space as the effects $a_{i}$, which is what
determines the tomographic power of a theory. Therefore the state
space of the restricted trit coincides with that of the classical
trit (cf.\ fig.~\ref{fig:The-state-space} in the main text).

Nevertheless, the restriction on the allowed effects has a dramatic
consequence: \emph{there are no perfectly distinguishable pure states,
therefore no classical states even when composing an arbitrary large
number of restricted trits}. To this end, first let us show that $\left\{ \alpha_{1},\alpha_{2},\alpha_{3}\right\} $
are no longer perfectly distinguishable. Consider a generic effect
$e=\lambda_{12}e_{12}+\lambda_{13}e_{13}+\lambda_{23}e_{23}$, where
$\lambda_{ij}\geq0$. This effect could yield 0 on $\alpha_{2}$ and
$\alpha_{3}$ if and only if $\lambda_{12}=\lambda_{13}=\lambda_{23}=0$,
but this would be the zero effect, which cannot yield 1 on $\alpha_{1}$.
This means that the $\alpha_{i}$'s cannot be jointly perfectly distinguishable.

Maybe we can still find a pair of $\alpha_{i}$'s that are perfectly
distinguishable? The answer is again negative. To see it, take e.g.\ the
pair $\left\{ \alpha_{1},\alpha_{2}\right\} $ (for the others the
argument is the same). The only element in the effect cone that yields
1 on $\alpha_{1}$ and 0 on $\alpha_{2}$ is $2e_{13}$, but this
is \emph{not} a physical effect, because $u-2e_{13}=a_{2}$, which
is \emph{not} an effect. In other words, $2e_{13}$ cannot exist in
a measurement of the form $\left\{ 2e_{13},u-2e_{13}\right\} $, but
all effects must be part of some measurement! In conclusion, the restricted
trit has \emph{no} classical states.

What about the other systems of this theory? They are generated by
composing restricted trits using the\emph{ minimal tensor product}
\citep{Namioka1969tensor,Barnum2016}: the normalized states of the
composite system $\mathrm{AB}$ are given by the convex hull of product
states of $\mathrm{A}$ and $\mathrm{B}$:
\[
\mathsf{St}_{1}\left(\mathrm{AB}\right)=\mathrm{Conv}\left\{ \rho_{\mathrm{A}}\otimes\rho_{\mathrm{B}}:\rho_{\mathrm{A}}\in\mathsf{St}_{1}\left(\mathrm{A}\right),\rho_{\mathrm{B}}\in\mathsf{St}_{1}\left(\mathrm{B}\right)\right\} ,
\]
where the subscript 1 means that we are only considering normalized
states. This is how ordinary classical systems compose. Similarly
the effect cones (generated by the extreme effects $e_{ij}$) are
composed using the minimal tensor product of cones, whereby
\begin{equation}
\mathsf{Eff}_{+}\left(\mathrm{AB}\right)=\mathrm{Con}\left\{ e_{ij}\otimes e'_{kl}\right\} ,\label{eq:minimal product effects}
\end{equation}
where $\mathrm{Con}$ denotes the conical hull. The generic composite
system is obtained by composing $N$ restricted trits. Therefore,
it has $3^{N}$ pure states and $3^{N}$ extreme effects, given by
$e_{i_{1}j_{1}}\otimes\ldots\otimes e_{i_{N}j_{N}}$, where for each
$k\in\left\{ 1,\ldots,N\right\} $ one has $i_{k}<j_{k}$, with $i_{k},j_{k}\in\left\{ 1,2,3\right\} $,
and $e_{i_{k}j_{k}}=\frac{1}{2}\left(a_{i_{k}}+a_{j_{k}}\right)$.
Given that the $\left\{ e_{i_{k}j_{k}}\right\} $ are linearly independent
for every $k\in\left\{ 1,\ldots,N\right\} $, the effects $\left\{ e_{i_{1}j_{1}}\otimes\ldots\otimes e_{i_{N}j_{N}}\right\} $
are still linearly independent, therefore they span the same vector
space as the effects $\left\{ a_{i_{1}}\otimes\ldots\otimes a_{i_{N}}\right\} $.
This means that all states will stay tomographically distinct, and
the state space of a composite of $N$ restricted trits will look
like the composite of $N$ classical trits, namely, like a simplex
with $3^{N}$ vertices.

Let us show that, even in composites, we still have a restriction
on the mathematically allowed effects, represented by the dual cone.
To this end, let us show that, for instance, we cannot obtain the
effect $a_{1}^{\otimes N}$ out of conical combinations of the extreme
effects $e_{i_{1}j_{1}}\otimes\ldots\otimes e_{i_{N}j_{N}}$ (for
other products of the $a_{i}$'s the argument is the same). Our goal
is to determine the \emph{non-negative} coefficients $\lambda_{i_{1}j_{1},\ldots,i_{N}j_{N}}$
such that
\[
a_{1}\otimes\ldots\otimes a_{1}=\sum_{i_{1},j_{1}}\ldots\sum_{i_{N},j_{N}}\lambda_{i_{1}j_{1},\ldots,i_{N}j_{N}}e_{i_{1}j_{1}}\otimes\ldots\otimes e_{i_{N}j_{N}}.
\]
Recalling the definition of $e_{i_{k}j_{k}}$ we have
\begin{equation}
a_{1}\otimes\ldots\otimes a_{1}=\frac{1}{2^{N}}\sum_{i_{1},j_{1}}\ldots\sum_{i_{N},j_{N}}\lambda_{i_{1}j_{1},\ldots,i_{N}j_{N}}\left(a_{i_{1}}+a_{j_{1}}\right)\otimes\ldots\otimes\left(a_{i_{N}}+a_{j_{N}}\right).\label{eq:big sum}
\end{equation}
Unfolding the above expression, we get
\[
a_{1}^{\otimes N}=\frac{1}{2^{N}}\sum_{j_{1},\ldots,j_{N}=2}^{3}\lambda_{1j_{1},\ldots,1j_{N}}a_{1}^{\otimes N}+\sum_{i_{1},j_{1}}'\ldots\sum_{i_{N},j_{N}}'\lambda_{i_{1}j_{1},\ldots,i_{N}j_{N}}\left(a_{i_{1}}+a_{j_{1}}\right)\otimes\ldots\otimes\left(a_{i_{N}}+a_{j_{N}}\right),
\]
where primed summations indicate the other summation terms in eq.~\eqref{eq:big sum}.
Clearly, primed summations must vanish, but since $\lambda_{i_{1}j_{1},\ldots,i_{N}j_{N}}\geq0$,
all coefficients $\lambda_{i_{1}j_{1},\ldots,i_{N}j_{N}}$ in primed
summations must vanish. Note that the coefficients $\lambda_{1j_{1},\ldots,1j_{N}}$
arise among the coefficients in the primed summations, which are all
zero. This means that $a_{1}^{\otimes N}=0$, which contradicts the
hypothesis. It follows that it is \emph{not} possible to obtain products
of the $a_{i}$'s from conical combination of the extreme effects
$e_{i_{1}j_{1}}\otimes\ldots\otimes e_{i_{N}j_{N}}$. In other words,
we are still in the presence of a restriction on the set of mathematically
allowed effects.

Let us show that even in composites of the restricted trit there are
no perfectly distinguishable pure states. To this end, it is enough
to show that there are no \emph{pairs} of perfectly distinguishable
pure states. Indeed, if the states $\left\{ \rho_{i}\right\} _{i=1}^{d}$,
with $d>2$, are distinguished by the measurement $\left\{ a_{i}\right\} _{i=1}^{d}$,
any pair $\left\{ \rho_{1},\rho_{2}\right\} \subset\left\{ \rho_{i}\right\} _{i=1}^{d}$
is also a set of perfectly distinguishable states: they are perfectly
distinguished by the measurement $\left\{ a_{1},u-a_{1}\right\} $.
Therefore no pairs of perfectly distinguishable pure states implies
no sets of perfectly distinguishable pure states. Note that, in all
composites, pure states are only of the product form; specifically
they are the states $\alpha_{i_{1}}\otimes\ldots\otimes\alpha_{i_{N}}$
for the composition of $N$ restricted trits. Now we will show by
induction on $N$ that there are no pairs of perfectly distinguishable
pure states in any composite in the theory of restricted trits. For
$N=1$, we have already proved it. Now suppose this is true for $N$,
and let us show that it is valid also for $N+1$. Take system $\mathrm{A}$
to be the composition of $N$ restricted trits, and let $\mathrm{B}$
be a single restricted trit. Suppose by contradiction that system
$\mathrm{AB}$, given by the composition of $N+1$ restricted trits
has two perfectly distinguishable pure states. They must be of the
form $\left\{ \alpha_{1}\otimes\beta_{1},\alpha_{2}\otimes\beta_{2}\right\} $,
where $\left\{ \alpha_{1},\alpha_{2}\right\} $ are pure states of
$\mathrm{A}$, and $\left\{ \beta_{1},\beta_{2}\right\} $ are pure
states of $\mathrm{B}$. Since $\left\{ \alpha_{1}\otimes\beta_{1},\alpha_{2}\otimes\beta_{2}\right\} $
are perfectly distinguishable, there exists a measurement $\left\{ E_{1},E_{2}\right\} $
on $\mathrm{AB}$ such that 
\[
\left\{ \begin{array}{l}
\left(E_{1}\middle|\alpha_{1}\otimes\beta_{1}\right)=1\\
\left(E_{1}\middle|\alpha_{2}\otimes\beta_{2}\right)=0
\end{array}\right.
\]
and
\[
\left\{ \begin{array}{l}
\left(E_{2}\middle|\alpha_{1}\otimes\beta_{1}\right)=0\\
\left(E_{2}\middle|\alpha_{2}\otimes\beta_{2}\right)=1
\end{array}\right..
\]
Now, by eq.~\eqref{eq:minimal product effects}, $E_{1}$ is a conical
combination of products of the extreme effects: $E_{1}=\sum_{i}\lambda_{i,1}a_{i,1}\otimes b_{i,1}$,
with $\lambda_{i,1}\geq0$, where $a_{i,1}$ and $b_{i,1}$ are extreme
effects of $\mathrm{A}$ and $\mathrm{B}$ respectively. Then we have
\[
\left(E_{1}\middle|\alpha_{2}\otimes\beta_{2}\right)=\sum_{i}\lambda_{i,1}\left(a_{i,1}\middle|\alpha_{2}\right)\left(b_{i,1}\middle|\beta_{2}\right)=0.
\]
Note that not all $\lambda_{i,1}$ can be zero, otherwise $E_{1}$
would be the zero vector. Therefore we have two possibilities, which
can be both true at the same time:
\begin{enumerate}
\item $\left(a_{i,1}\middle|\alpha_{2}\right)=0$ for every $i$. In this
case, consider the effect $e_{1}$ of $\mathrm{A}$ defined as $e_{1}:=\sum_{i}\lambda_{i,1}\left(b_{i,1}\middle|\beta_{1}\right)a_{i,1}$,
or in diagrams\[
\begin{aligned}\Qcircuit @C=1em @R=.7em @!R {   & \qw \poloFantasmaCn{\rA} &\measureD{e_1}}\end{aligned}~:=\!\!\!\!\begin{aligned}\Qcircuit @C=1em @R=.7em @!R { &  & \qw \poloFantasmaCn{\rA} &\multimeasureD{1}{E_1} \\ & \prepareC{\beta_1} & \qw \poloFantasmaCn{\rB}  &\ghost{E_1} }\end{aligned}~.
\]We have
\[
\left(e_{1}\middle|\alpha_{1}\right)=\sum_{i}\lambda_{i,1}\left(a_{i,1}\middle|\alpha_{1}\right)\left(b_{i,1}\middle|\beta_{1}\right)=\left(E_{1}\middle|\alpha_{1}\otimes\beta_{1}\right)=1
\]
and
\[
\left(e_{1}\middle|\alpha_{2}\right)=\sum_{i}\lambda_{i,1}\left(a_{i,1}\middle|\alpha_{2}\right)\left(b_{i,1}\middle|\beta_{1}\right)=0
\]
because $\left(a_{i,1}\middle|\alpha_{2}\right)=0$ for every $i$.
This means that the pure states $\left\{ \alpha_{1},\alpha_{2}\right\} $
are perfectly distinguished by the measurement $\left\{ e_{1},u-e_{1}\right\} $.
This contradicts the induction hypothesis that for the composite of
$N$ restricted trits (system $\mathrm{A}$) there are no perfectly
distinguishable pure states.
\item $\left(b_{i,1}\middle|\beta_{2}\right)=0$ for every $i$. The proof
is essentially the same as in the previous case. Consider the effect
$f_{1}$ of $\mathrm{B}$, defined as $f_{1}:=\sum_{i}\lambda_{i,1}\left(a_{i,1}\middle|\alpha_{1}\right)b_{i,1}$,
or in diagrams\[
\begin{aligned}\Qcircuit @C=1em @R=.7em @!R {   & \qw \poloFantasmaCn{\rB} &\measureD{f_1}}\end{aligned}~:=\!\!\!\!\begin{aligned}\Qcircuit @C=1em @R=.7em @!R { &\prepareC{\alpha_1}  & \qw \poloFantasmaCn{\rA} &\multimeasureD{1}{E_1} \\ &  & \qw \poloFantasmaCn{\rB}  &\ghost{E_1} }\end{aligned}~.
\]One has
\[
\left(f_{1}\middle|\beta_{1}\right)=\sum_{i}\lambda_{i,1}\left(a_{i,1}\middle|\alpha_{1}\right)\left(b_{i,1}\middle|\beta_{1}\right)=\left(E_{1}\middle|\alpha_{1}\otimes\beta_{1}\right)=1
\]
 and
\[
\left(f_{1}\middle|\beta_{2}\right)=\sum_{i}\lambda_{i,1}\left(a_{i,1}\middle|\alpha_{1}\right)\left(b_{i,1}\middle|\beta_{2}\right)=0
\]
because $\left(b_{i,1}\middle|\beta_{2}\right)=0$ for every $i$.
Hence the pure states $\left\{ \beta_{1},\beta_{2}\right\} $ are
perfectly distinguished by the measurement $\left\{ f_{1},u-f_{1}\right\} $.
This contradicts the fact that there are no perfectly distinguishable
pure states in the restricted trit (system $\mathrm{B}$).
\end{enumerate}
In conclusion, we have proved that, in all composite systems in the
theory of restricted trits, there are no perfectly distinguishable
pure states. This means that no suitable classical limit can exist
for this theory, not even considering an extremely large system.

\section{Complete decoherence\label{sec:Complete-decoherence}}

From the definition of complete decoherence on a classical set $\boldsymbol{\alpha}$
(definition~\ref{def:complete decoherence} in the main text), it
is immediate to see that applying the same decoherence twice on a
single system is like applying it once. In other words, $D_{\boldsymbol{\alpha}}^{2}\rho=D_{\boldsymbol{\alpha}}\rho$
for every state $\rho$. Indeed, by definition $D_{\boldsymbol{\alpha}}\rho$
is a classical state $\gamma$, and applying the complete decoherence
again, this classical state stays the same.

From a physical point of view, the fact that $D_{\boldsymbol{\alpha}}^{2}\rho=D_{\boldsymbol{\alpha}}\rho$
means that once a (single) system is decohered, classicality is reached,
and there is nothing left to decohere. Note that $D_{\boldsymbol{\alpha}}^{2}\rho=D_{\boldsymbol{\alpha}}\rho$
for every $\rho$ is \emph{not} enough to conclude that $D_{\boldsymbol{\alpha}}^{2}=D_{\boldsymbol{\alpha}}$,
unless the theory satisfies Local Tomography \citep{Chiribella-purification},
because, in general, transformations are defined by their action on
half of a bipartite state, not on a state of a single system (see
appendix~\ref{sec:General-probabilistic-theories}).

After understanding the behavior of complete decoherence on states,
we need to look at what happens if we apply it to the effects of the
theory. As it maps every state to a classical state, we expect that
it does the same with effects: every effect becomes classical. This
is indeed the case, as shown by the following:
\begin{prop}
The set $\left\{ eD_{\boldsymbol{\alpha}}\right\} $ of decohered
effects, where $e\in\mathsf{Eff}\left(\mathrm{A}\right)$ is an effect
of the original theory, coincides with the set of classical effects
of $\boldsymbol{\alpha}$.
\end{prop}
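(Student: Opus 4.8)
The plan is to show that the decohering operation $e\mapsto eD_{\boldsymbol{\alpha}}$, composed with the quotient map $e\mapsto[e]$, sets up a bijection between the set of decohered effects $\left\{ eD_{\boldsymbol{\alpha}}\right\}$ and the classical effects $\mathsf{Eff}\left(\mathrm{A}\right)/\boldsymbol{\alpha}$, so that the decohered effects are exactly the canonical representatives of the classical ones. The whole argument rests on two short computations using the two defining properties of a complete decoherence (Definition~\ref{def:complete decoherence}), together with the fact that an effect is completely determined by the probabilities it assigns to single-system states — this is how effects are identified in the framework, and it is crucially \emph{weaker} than the tomographic condition for transformations, so no Local Tomography is invoked here. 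I would begin by noting that $eD_{\boldsymbol{\alpha}}$ is a legitimate effect of the original theory, being the sequential composition of the channel $D_{\boldsymbol{\alpha}}$ with the effect $e$.

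The first key step (surjectivity) is to show $eD_{\boldsymbol{\alpha}}\sim_{\boldsymbol{\alpha}}e$, i.e.\ $[eD_{\boldsymbol{\alpha}}]=[e]$. This is immediate from property~2 of Definition~\ref{def:complete decoherence}: for every $\gamma\in\boldsymbol{\alpha}$ one has $(eD_{\boldsymbol{\alpha}}|\gamma)=(e|D_{\boldsymbol{\alpha}}\gamma)=(e|\gamma)$, since $D_{\boldsymbol{\alpha}}$ fixes classical states. Hence every classical effect $[e]$ is the image under the quotient map of the decohered effect $eD_{\boldsymbol{\alpha}}$, so the quotient map is onto when restricted to $\left\{ eD_{\boldsymbol{\alpha}}\right\}$.

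The second key step (injectivity) is to show that $e\sim_{\boldsymbol{\alpha}}f$ forces $eD_{\boldsymbol{\alpha}}=fD_{\boldsymbol{\alpha}}$ \emph{as effects}. Here I would use property~1: for an arbitrary state $\rho$, set $\gamma:=D_{\boldsymbol{\alpha}}\rho\in\boldsymbol{\alpha}$; then $(eD_{\boldsymbol{\alpha}}|\rho)=(e|\gamma)=(f|\gamma)=(fD_{\boldsymbol{\alpha}}|\rho)$, using that $e$ and $f$ agree on all classical states. As this holds for every $\rho$ and effects are fixed by their action on states, $eD_{\boldsymbol{\alpha}}=fD_{\boldsymbol{\alpha}}$. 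Combining the two steps, if two decohered effects lie in the same class then $[e]=[eD_{\boldsymbol{\alpha}}]=[fD_{\boldsymbol{\alpha}}]=[f]$, so $e\sim_{\boldsymbol{\alpha}}f$ and therefore $eD_{\boldsymbol{\alpha}}=fD_{\boldsymbol{\alpha}}$; thus each class contains exactly one decohered effect. This yields the bijection, hence the coincidence of the two sets. (As a by-product, applying step two to the pair $eD_{\boldsymbol{\alpha}}\sim_{\boldsymbol{\alpha}}e$ gives $(eD_{\boldsymbol{\alpha}})D_{\boldsymbol{\alpha}}=eD_{\boldsymbol{\alpha}}$, confirming that the decohered effects are the fixed points of the decohering map.)

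I do not expect a serious obstacle: both steps are one-line computations from the defining properties of $D_{\boldsymbol{\alpha}}$. The only point demanding care is the bookkeeping between actual effects and equivalence classes, and the justification that agreement on all single-system states suffices to \emph{equate} two effects — so that step two produces a genuine equality of effects, not merely an equivalence, and so that the caveat about transformations needing bipartite states plays no role. As a concrete sanity check exhibiting the canonical representative, I would specialize to the measurement-induced decoherence $\widehat{D}_{\boldsymbol{\alpha}}=\sum_{i=1}^{d}|\alpha_{i})(a_{i}|$, for which $e\widehat{D}_{\boldsymbol{\alpha}}=\sum_{i=1}^{d}(e|\alpha_{i})\,a_{i}$; this is precisely the representative $\xi'=\sum_{i}\lambda_{i}a_{i}$ with $\lambda_{i}=(e|\alpha_{i})$ identified in appendix~\ref{subsec:Restricting-effects}, confirming that the decohered effects range over all classical effects as $e$ ranges over $\mathsf{Eff}\left(\mathrm{A}\right)$.
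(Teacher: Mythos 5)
Your proposal is correct, and its first step is precisely the paper's entire proof: the paper only establishes that $e\sim_{\boldsymbol{\alpha}}eD_{\boldsymbol{\alpha}}$ via property~2 of Definition~\ref{def:complete decoherence} (i.e.\ $\left(e\middle|D_{\boldsymbol{\alpha}}\middle|\gamma\right)=\left(e\middle|\gamma\right)$ for $\gamma\in\boldsymbol{\alpha}$), concludes that every equivalence class admits a decohered representative, and stops there. What you do beyond the paper is the injectivity step: using property~1, you show that $e\sim_{\boldsymbol{\alpha}}f$ forces $eD_{\boldsymbol{\alpha}}=fD_{\boldsymbol{\alpha}}$ as genuine effects, because $\left(eD_{\boldsymbol{\alpha}}\middle|\rho\right)=\left(e\middle|D_{\boldsymbol{\alpha}}\rho\right)$ and $D_{\boldsymbol{\alpha}}\rho$ is always classical. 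This upgrades the paper's surjection onto $\mathsf{Eff}\left(\mathrm{A}\right)/\boldsymbol{\alpha}$ to a bijection, so each class contains exactly one decohered effect, which is what really licenses the word ``coincides'' in the statement; the paper's ``and vice versa'' glosses over this. Your justification that agreement on all single-system states suffices to equate two effects is also the right one\textemdash the framework identifies effects tomographically through states of a single system, so the Local Tomography caveat that afflicts transformations (and complete decoherences viewed as channels, cf.\ appendix~\ref{sec:Properties-of-complete}) is indeed irrelevant here. Your closing observation, that for the MID one gets $e\widehat{D}_{\boldsymbol{\alpha}}=\sum_{i=1}^{d}\left(e\middle|\alpha_{i}\right)a_{i}$, recovering exactly the representative $\xi'$ constructed in appendix~\ref{subsec:Restricting-effects}, is a nice consistency check tying the dynamical (decoherence) and kinematic (restriction) definitions of classical effects together; the paper makes this connection only implicitly. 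In short: same core computation, but your version proves a strictly stronger and cleaner statement.
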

\begin{proof}
To show that the two sets coincide, we will actually show that there
is a canonical bijection between the set of decohered effects $\left\{ eD_{\boldsymbol{\alpha}}\right\} $
and the set $\mathsf{Eff}\left(\mathrm{A}\right)/\boldsymbol{\alpha}$.
This bijection associates the equivalence class $\left[eD_{\boldsymbol{\alpha}}\right]$
in $\mathsf{Eff}\left(\mathrm{A}\right)/\boldsymbol{\alpha}$ with
every decohered effect $eD_{\boldsymbol{\alpha}}$. Let us prove that
this is indeed a bijection. To this end, first observe that two decohered
effects $eD_{\boldsymbol{\alpha}}$ and $fD_{\boldsymbol{\alpha}}$
are equal if and only if $e\sim_{\boldsymbol{\alpha}}f$. Indeed,
$eD_{\boldsymbol{\alpha}}=fD_{\boldsymbol{\alpha}}$ if and only if
$\left(e\middle|D_{\boldsymbol{\alpha}}\middle|\rho\right)=\left(f\middle|D_{\boldsymbol{\alpha}}\middle|\rho\right)$,
for every state $\rho$. Now, define $\gamma:=D_{\boldsymbol{\alpha}}\rho$,
which is a classical state. Therefore, $eD_{\boldsymbol{\alpha}}=fD_{\boldsymbol{\alpha}}$
if and only if $\left(e\middle|\gamma\right)=\left(f\middle|\gamma\right)$
for every classical state $\gamma\in\boldsymbol{\alpha}$, which means
$e\sim_{\boldsymbol{\alpha}}f$.

Let us prove that the mapping $eD_{\boldsymbol{\alpha}}\mapsto\left[eD_{\boldsymbol{\alpha}}\right]$
is injective. Assume $\left[eD_{\boldsymbol{\alpha}}\right]=\left[fD_{\boldsymbol{\alpha}}\right]$,
and let us show that $eD_{\boldsymbol{\alpha}}=fD_{\boldsymbol{\alpha}}$.
If $\left[eD_{\boldsymbol{\alpha}}\right]=\left[fD_{\boldsymbol{\alpha}}\right]$,
then $eD_{\boldsymbol{\alpha}}\sim_{\boldsymbol{\alpha}}fD_{\boldsymbol{\alpha}}$,
which means $\left(e\middle|D_{\boldsymbol{\alpha}}\middle|\gamma\right)=\left(f\middle|D_{\boldsymbol{\alpha}}\middle|\gamma\right)$
for every classical state $\gamma\in\boldsymbol{\alpha}$. Now, $D_{\boldsymbol{\alpha}}\gamma=\gamma$,
so we have $\left(e\middle|\gamma\right)=\left(f\middle|\gamma\right)$
for every $\gamma\in\boldsymbol{\alpha}$. This means $e\sim_{\boldsymbol{\alpha}}f$,
which allows us to conclude that $eD_{\boldsymbol{\alpha}}=fD_{\boldsymbol{\alpha}}$.

Now, let us prove that the mapping $eD_{\boldsymbol{\alpha}}\mapsto\left[eD_{\boldsymbol{\alpha}}\right]$
is surjective too. Take any equivalence class $\left[e\right]$ in
$\mathsf{Eff}\left(\mathrm{A}\right)/\boldsymbol{\alpha}$, and let
us show that $\left[e\right]=\left[eD_{\boldsymbol{\alpha}}\right]$,
so the equivalence class $\left[e\right]$ is associated with the
decohered effect $eD_{\boldsymbol{\alpha}}$. Now, for every classical
state $\gamma\in\boldsymbol{\alpha}$, $\left(e\middle|D_{\boldsymbol{\alpha}}\middle|\gamma\right)=\left(e\middle|\gamma\right)$
because $D_{\boldsymbol{\alpha}}\gamma=\gamma$. This shows that $e\sim_{\boldsymbol{\alpha}}eD_{\boldsymbol{\alpha}}$,
so $\left[e\right]=\left[eD_{\boldsymbol{\alpha}}\right]$.

Finally, let us show that the mapping $eD_{\boldsymbol{\alpha}}\mapsto\left[eD_{\boldsymbol{\alpha}}\right]$
respects the sums defined in the respective sets. Suppose $e+f$ is
a valid effect of the theory, where $e$ and $f$ are two valid effects,
then $eD_{\boldsymbol{\alpha}}+fD_{\boldsymbol{\alpha}}=\left(e+f\right)D_{\boldsymbol{\alpha}}$
is a valid decohered effect. With this effect we associate the equivalence
class $\left[\left(e+f\right)D_{\boldsymbol{\alpha}}\right]=\left[eD_{\boldsymbol{\alpha}}\right]+\left[fD_{\boldsymbol{\alpha}}\right]$,
which shows that the mapping $eD_{\boldsymbol{\alpha}}\mapsto\left[eD_{\boldsymbol{\alpha}}\right]$
respects the sums. This allows us to conclude that all classical effects
can be regarded as decohered effects and vice versa.
\end{proof}
Thus, complete decoherence maps all effects to classical effects,
but it is not obvious if it leaves classical effects invariant: in
general, it just sends them to an equivalent effect on $\boldsymbol{\alpha}$.
From this point of view, the most general definition of complete decoherence
(definition~\ref{def:complete decoherence} in the main text) is
asymmetric since classical states are left invariant, but not classical
effects in general.

Definition~\ref{def:complete decoherence} in the main text is so
general that, in principle, given classical set $\boldsymbol{\alpha}$,
there may be \emph{more than one} channel that is a complete decoherence
on $\boldsymbol{\alpha}$. There are essentially two possible ways
in which a complete decoherence on $\boldsymbol{\alpha}$ might be
non-unique.
\begin{enumerate}
\item We can have two complete decoherences on $\boldsymbol{\alpha}$, $D_{1,\boldsymbol{\alpha}}$
and $D_{2,\boldsymbol{\alpha}}$, that decohere some state $\rho$
to different classical states: $D_{1,\boldsymbol{\alpha}}\rho\neq D_{2,\boldsymbol{\alpha}}\rho$.
\item More subtly, if the theory does \emph{not} satisfy Local Tomography,
two complete decoherences $D_{1,\boldsymbol{\alpha}}$ and $D_{2,\boldsymbol{\alpha}}$
on $\boldsymbol{\alpha}$ can be indistinguishable at the level of
single systems, namely $D_{1,\boldsymbol{\alpha}}\rho=D_{2,\boldsymbol{\alpha}}\rho$
for every $\rho$, but they can differ when applied only to part of
a bipartite state:\[
\begin{aligned}\Qcircuit @C=1em @R=.7em @!R { & \multiprepareC{1}{\rho} & \qw \poloFantasmaCn{\rA} & \gate{D_{1,\boldsymbol{\alpha}}} & \qw \poloFantasmaCn{\rA} & \qw \\ & \pureghost{\rho} & \qw \poloFantasmaCn{\rB} & \qw &\qw &\qw}\end{aligned}~\neq\!\!\!\!\begin{aligned}\Qcircuit @C=1em @R=.7em @!R { & \multiprepareC{1}{\rho} & \qw \poloFantasmaCn{\rA} & \gate{D_{2,\boldsymbol{\alpha}}} & \qw \poloFantasmaCn{\rA} & \qw \\ & \pureghost{\rho} & \qw \poloFantasmaCn{\rB} & \qw &\qw &\qw}\end{aligned}~.
\]
\end{enumerate}
For quantum theory, however, definition~\ref{def:complete decoherence}
in the main text is enough to pick a unique decoherence for every
fixed orthonormal basis, which is clearly the TID on that basis. The
proof is not included here since is not relevant for the present paper.
In appendix~\ref{subsec:Test-induced-decoherence-and}, we present
an example of a GPT in which the decoherence on a classical set is
highly non-unique (cf.\ example~\ref{exa:uniqueness}), in clear
contrast with the quantum behavior.

\subsection{Test-induced decoherence and its properties\label{subsec:Test-induced-decoherence-and}}

The next relevant question we investigate is whether a complete decoherence
actually exists in every fundamental causal theory. In quantum theory,
the decoherence on a classical set $\boldsymbol{\alpha}$, described
by an orthonormal basis $\left\{ \left|\alpha_{j}\right\rangle \right\} _{j=1}^{d}$,
is obtained from the von Neumann measurement on that orthonormal basis.
Indeed, if we sum over all outcomes, we get the complete decoherence.

In appendix~\ref{subsec:Consequences-of-Causality}, we noted that
in causal theories one can always construct measure-and-prepare tests.
Now we build one out of the pure states $\left\{ \alpha_{i}\right\} _{i=1}^{d}$
of a classical set and their associated distinguishing measurement
$\left\{ a_{i}\right\} _{i=1}^{d}$: the test $\left\{ \left|\alpha_{i}\right)\left(a_{i}\right|\right\} _{i=1}^{d}$,
which can be viewed as a non-demolition measurement on the classical
set $\boldsymbol{\alpha}$. Taking the coarse-graining over all $d$
outcomes yields a measure-and-prepare channel $\widehat{D}_{\boldsymbol{\alpha}}=\sum_{i=1}^{d}\left|\alpha_{i}\right)\left(a_{i}\right|$.
It is straightforward to show that $\widehat{D}_{\boldsymbol{\alpha}}$
is a complete decoherence, which we term the \emph{test-induced decoherence}
(TID).
\begin{proof}[Proof of proposition~\ref{prop:TID} in the main text]
We must check if $\widehat{D}_{\boldsymbol{\alpha}}$ satisfies the
two properties defining a complete decoherence (cf.\ definition~\ref{def:complete decoherence}
in the main text).
\begin{enumerate}
\item For any state $\rho$,
\[
\widehat{D}_{\boldsymbol{\alpha}}\rho=\sum_{i=1}^{d}\left|\alpha_{i}\right)\left(a_{i}\middle|\rho\right)=:\sum_{i=1}^{d}p_{i}\alpha_{i},
\]
where we have set $p_{i}:=\left(a_{i}\middle|\rho\right)$. Note that
$p_{i}\in\left[0,1\right]$, and that $\sum_{i=1}^{d}p_{i}=1$ because
$\left\{ a_{i}\right\} _{i=1}^{d}$ is a measurement. Therefore $\widehat{D}_{\boldsymbol{\alpha}}\rho$
is a classical state, lying in the simplex generated by the $\alpha_{i}$'s.
\item For any $\alpha_{i}$, we have
\[
\widehat{D}_{\boldsymbol{\alpha}}\alpha_{i}=\sum_{j=1}^{d}\left|\alpha_{j}\right)\left(a_{j}\middle|\alpha_{i}\right)=\alpha_{i}.
\]
This means that $\widehat{D}_{\boldsymbol{\alpha}}$ preserves all
pure states in $\boldsymbol{\alpha}$, and by linearity it preserves
all classical states in $\boldsymbol{\alpha}$.
\end{enumerate}
\end{proof}
This shows that a complete decoherence always exists in causal theories.
The TID enjoys some remarkable properties that make it a physically
motivated form of decoherence. 
\begin{prop}
Let $\left\{ a_{i}\right\} _{i=1}^{d}$ be the measurement associated
with the classical set $\boldsymbol{\alpha}$. The TID $\widehat{D}_{\boldsymbol{\alpha}}$
satisfies the following properties:
\begin{enumerate}
\item $\widehat{D}_{\boldsymbol{\alpha}}^{2}=\widehat{D}_{\boldsymbol{\alpha}}$.\label{enu:idempotence}
\item $a_{i}\widehat{D}_{\boldsymbol{\alpha}}=a_{i}$ for every $i$.\label{enu:classical effects respected}
\end{enumerate}
\end{prop}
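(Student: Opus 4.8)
The plan is to exploit the measure-and-prepare structure of $\widehat{D}_{\boldsymbol{\alpha}}=\sum_{i=1}^{d}\left|\alpha_{i}\right)\left(a_{i}\right|$ together with the single defining relation of the distinguishing measurement, $\left(a_{i}\middle|\alpha_{j}\right)=\delta_{ij}$. The crucial observation is that whenever two such transformations are composed, an internal ``scalar bottleneck'' $\left(a_{i}\middle|\alpha_{j}\right)$ appears, and this is a genuine number (a closed circuit, a state composed with an effect) rather than an operator. Hence it factors out of the composition, so that the orthonormality of the $\left\{\alpha_{i}\right\}$ against the $\left\{a_{i}\right\}$ collapses every double sum to a single sum, yielding an identity directly at the level of \emph{transformations}. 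This is precisely what lets us sidestep the Local Tomography caveat raised in appendix~\ref{sec:Properties-of-complete}.

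For property~\ref{enu:idempotence} I would simply compose
\[
\widehat{D}_{\boldsymbol{\alpha}}^{2}=\left(\sum_{i=1}^{d}\left|\alpha_{i}\right)\left(a_{i}\right|\right)\left(\sum_{j=1}^{d}\left|\alpha_{j}\right)\left(a_{j}\right|\right)=\sum_{i,j=1}^{d}\left|\alpha_{i}\right)\left(a_{i}\middle|\alpha_{j}\right)\left(a_{j}\right|,
\]
and then use $\left(a_{i}\middle|\alpha_{j}\right)=\delta_{ij}$ to reduce the right-hand side to $\sum_{i}\left|\alpha_{i}\right)\left(a_{i}\right|=\widehat{D}_{\boldsymbol{\alpha}}$. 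Because the middle factor is a scalar, the final expression is literally the same operator as $\widehat{D}_{\boldsymbol{\alpha}}$; the equality is therefore a bona fide identity of channels, stable under $\otimes\,\mathcal{I}_{\mathrm{S}}$, so it does \emph{not} rely on the implication $\widehat{D}_{\boldsymbol{\alpha}}^{2}\rho=\widehat{D}_{\boldsymbol{\alpha}}\rho\Rightarrow\widehat{D}_{\boldsymbol{\alpha}}^{2}=\widehat{D}_{\boldsymbol{\alpha}}$, which would require Local Tomography for a generic complete decoherence.

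For property~\ref{enu:classical effects respected} I would compose the effect $a_{i}$ after the channel,
\[
a_{i}\widehat{D}_{\boldsymbol{\alpha}}=\left(a_{i}\right|\sum_{j=1}^{d}\left|\alpha_{j}\right)\left(a_{j}\right|=\sum_{j=1}^{d}\left(a_{i}\middle|\alpha_{j}\right)\left(a_{j}\right|=\left(a_{i}\right|=a_{i},
\]
again using orthonormality to annihilate all terms but $j=i$. This shows that the distinguishing effects are left \emph{strictly} invariant, so that $\widehat{D}_{\boldsymbol{\alpha}}$ behaves symmetrically on classical states and classical effects, in contrast to the asymmetry of a generic $D_{\boldsymbol{\alpha}}$ noted in appendix~\ref{sec:Properties-of-complete}, which only maps classical effects to equivalent ones.

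The computations are immediate, so I would not expect a genuine obstacle so much as one point deserving care: justifying that these are operator identities and not merely equalities of the maps restricted to states of a single system. The justification is exactly that $\left(a_{i}\middle|\alpha_{j}\right)$ is a number, so the factorization in each display is unaffected by the presence of an ancilla wire; one could make this fully explicit by inserting a spectator system and reading off the same diagram on both sides, but nothing beyond the scalar nature of $\left(a_{i}\middle|\alpha_{j}\right)$ is required.
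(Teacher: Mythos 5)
Your proof is correct and is essentially identical to the paper's: both properties are established by the same direct computation, expanding the composition and collapsing the sums via $\left(a_{i}\middle|\alpha_{j}\right)=\delta_{ij}$. Your added remark that the scalar nature of $\left(a_{i}\middle|\alpha_{j}\right)$ makes these identities of transformations (hence stable under composition with an ancilla, without invoking Local Tomography) matches the observation the paper makes immediately after its proof.
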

\begin{proof}
Let us prove the two properties.
\begin{enumerate}
\item Let us compose the TID with itself:
\[
\widehat{D}_{\boldsymbol{\alpha}}^{2}=\widehat{D}_{\boldsymbol{\alpha}}\sum_{i=1}^{d}\left|\alpha_{i}\right)\left(a_{i}\right|=\sum_{i=1}^{d}\sum_{j=1}^{d}\left|\alpha_{j}\right)\left(a_{j}\middle|\alpha_{i}\right)\left(a_{i}\right|=\sum_{i=1}^{d}\left|\alpha_{i}\right)\left(a_{i}\right|=\widehat{D}_{\boldsymbol{\alpha}}.
\]
\item It is a straightforward calculation. Indeed 
\[
a_{i}\widehat{D}_{\boldsymbol{\alpha}}=\sum_{j=1}^{d}\left(a_{i}\middle|\alpha_{j}\right)\left(a_{j}\right|=a_{i}.
\]
\end{enumerate}
\end{proof}
Note that property~\ref{enu:idempotence} means that the TID satisfies
a stronger idempotence property than a generic complete decoherence:
not only is this property valid on single systems, but also when the
TID is applied to a part of a bipartite state. Again, this means that
to decohere a system completely, it is enough to apply the TID just
once: further applications of the TID will not change anything. Property~\ref{enu:classical effects respected}
states that the TID preserves the effects that perfectly distinguish
the pure states in $\boldsymbol{\alpha}$. Since all classical effects
arise as suitable conical combinations of these effects, it means
that the TID preserves each classical effect. This property removes
the asymmetry we observed in the behavior of complete decoherences,
which in general preserve only classical states, but not classical
effects. The TID, instead, treats classical states and effects on
equal footing, doing nothing to both of them. This makes it more physically
appealing.

Recall that, in quantum theory, decoherence is always associated with
the presence of an environment where information is leaked \citep{Decoherence-review,Review-decoherence,Selby-leaks}.
Instead in definition~\ref{def:complete decoherence} in the main
text, as well as in some other proposals in the GPT literature \citep{Selby-entanglement2,Hyperdecoherence},
the environment does not seem to play any explicit role in the process.
However, in the TID, the environment and external observers are again
present, albeit implicitly. Indeed, the fact that the TID arises as
the coarse-graining of a test means that, at least in principle, an
external observer is present in the process of decoherence.

Previous contributions on decoherence in GPTs \citep{Selby-entanglement2,Hyperdecoherence}
required the complete decoherence to be strictly purity-decreasing
\citep{Selby-entanglement2}, or alternatively, that if a decohered
state is pure, the original state was pure too \citep{Hyperdecoherence}.
In the following counterexample we show that the TID does not satisfy
these desiderata in general: in some theories mixed states can be
decohered to pure states. This behavior sharply contrasts with the
one observed in quantum theory.
\begin{example}
\label{exa:purity-decoherence}Let us consider the square bit \citep{Barrett}.
Here the state space is a square, and the pure states are its vertices.
This theory satisfies the no-restriction hypothesis, so all mathematically
allowed effects are valid effects. The pure states are the vertices
of the square. Fig.~\ref{fig:square-classical} shows the state space.
\begin{figure}
\begin{centering}
\includegraphics[scale=0.7]{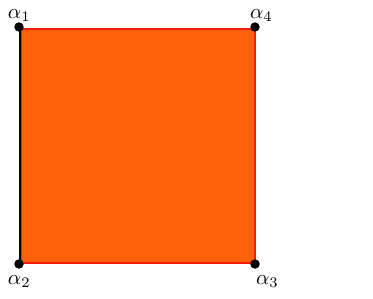}
\par\end{centering}
\caption{\label{fig:square-classical}The state space of the square bit. Here
$\alpha_{1}$, $\alpha_{2}$, $\alpha_{3}$, and $\alpha_{4}$ are
pure states. The classical set $\boldsymbol{\alpha}=\mathrm{Conv}\left\{ \alpha_{1},\alpha_{2}\right\} $
is shown in black.}
\end{figure}
 The pure states are the vectors
\[
\alpha_{1}=\left(\begin{array}{c}
-1\\
1\\
1
\end{array}\right)\qquad\alpha_{2}=\left(\begin{array}{c}
-1\\
-1\\
1
\end{array}\right)\qquad\alpha_{3}=\left(\begin{array}{c}
1\\
-1\\
1
\end{array}\right)\qquad\alpha_{4}=\left(\begin{array}{c}
1\\
1\\
1
\end{array}\right),
\]
where the third component represents the fact that these states are
normalized. Now consider the effects
\[
a_{1}=\frac{1}{2}\left(\begin{array}{ccc}
0 & 1 & 1\end{array}\right)\qquad a_{2}=\frac{1}{2}\left(\begin{array}{ccc}
0 & -1 & 1\end{array}\right).
\]
They make up a measurement $\left\{ a_{1},a_{2}\right\} $ that perfectly
distinguishes the pure states $\left\{ \alpha_{1},\alpha_{2}\right\} $
in a single shot. Therefore, we can consider the classical set $\boldsymbol{\alpha}=\mathrm{Conv}\left\{ \alpha_{1},\alpha_{2}\right\} $,
which is simply the segment connecting $\alpha_{1}$ and $\alpha_{2}$
(see fig.~\ref{fig:square-classical}). Now consider the TID $\widehat{D}_{\boldsymbol{\alpha}}=\left|\alpha_{1}\right)\left(a_{1}\right|+\left|\alpha_{2}\right)\left(a_{2}\right|$.

Note that $\widehat{D}_{\boldsymbol{\alpha}}$ decoheres the mixed
state $\frac{1}{2}\left(\alpha_{1}+\alpha_{4}\right)$ to the pure
state $\alpha_{1}$. This TID definitely increases purity! Is $\frac{1}{2}\left(\alpha_{1}+\alpha_{4}\right)$
the only state with this unexpected behavior? To get a better understanding
let us find out what $\widehat{D}_{\boldsymbol{\alpha}}$ does to
all states of the square bit. To this end, a particularly useful way
to parametrize a generic state of this theory is suggested in fig.~\ref{fig:square}.
\begin{figure}
\begin{centering}
\includegraphics[scale=0.7]{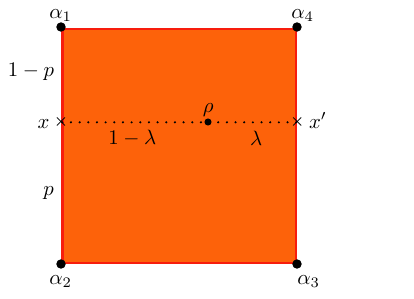}
\par\end{centering}
\caption{\label{fig:square}A particularly useful parametrization of a generic
state $\rho$ of the square bit.}
\end{figure}
 In this parametrization $\rho=\lambda x+\left(1-\lambda\right)x'$,
with $\lambda\in\left[0,1\right]$. Here $x=p\alpha_{1}+\left(1-p\right)\alpha_{2}$,
and $x'=p\alpha_{4}+\left(1-p\right)\alpha_{3}$, where $p\in\left[0,1\right]$.
In summary,
\begin{equation}
\rho=\lambda p\alpha_{1}+\lambda\left(1-p\right)\alpha_{2}+\left(1-\lambda\right)\left(1-p\right)\alpha_{3}+p\left(1-\lambda\right)\alpha_{4}.\label{eq:paramtrization}
\end{equation}
From this expression, it is immediate to see that
\[
\widehat{D}_{\boldsymbol{\alpha}}\rho=p\alpha_{1}+\left(1-p\right)\alpha_{2}=x;
\]
in other words, the TID horizontally projects all states to the $x$-component
of their above parametrization, which belongs to the set $\boldsymbol{\alpha}$.
This is illustrated in fig.~\ref{fig:decoherence}. 
\begin{figure}
\begin{centering}
\includegraphics[viewport=0bp 0bp 159bp 143bp,scale=0.7]{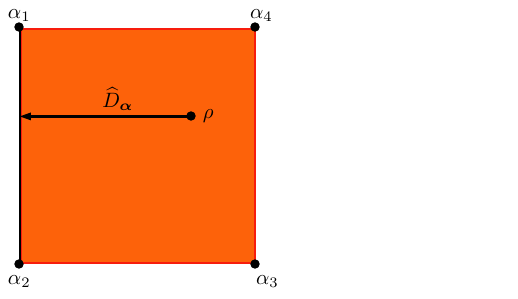}
\par\end{centering}
\caption{\label{fig:decoherence}The action of $\widehat{D}_{\boldsymbol{\alpha}}$
on a state $\rho$ of the square bit is represented as an arrow. The
tip of the arrow indicates the decohered state.}
\end{figure}
From this geometric picture, it is clear that $\widehat{D}_{\boldsymbol{\alpha}}$
decoheres all the mixed states of the form $p\alpha_{1}+\left(1-p\right)\alpha_{4}$,
and $p\alpha_{2}+\left(1-p\right)\alpha_{3}$, with $p\in\left(0,1\right)$,
to pure states ($\alpha_{1}$ and $\alpha_{2}$, respectively).
\end{example}
The natural question is when this counter-intuitive, purity-increasing
behavior of the TID can be observed in a physical theory. In general,
it is enough that one of the distinguishing effects $\left\{ a_{i}\right\} _{i=1}^{d}$,
say $a_{1}$, gives 1 on another pure state $\psi$ \emph{not} in
the classical set $\boldsymbol{\alpha}$. To show it, first note that
since $\left\{ a_{i}\right\} _{i=1}^{d}$ is a measurement, if $\left(a_{1}\middle|\psi\right)=1$,
then $\left(a_{i}\middle|\psi\right)=0$ for $i>1$. Now take any
\emph{mixed} state of the form $p\alpha_{1}+\left(1-p\right)\psi$,
with $p\in\left(0,1\right)$; the TID $\widehat{D}_{\boldsymbol{\alpha}}=\sum_{i=1}^{d}\left|\alpha_{i}\right)\left(a_{i}\right|$
decoheres it to the \emph{pure} state $\alpha_{1}$. In example~\ref{exa:purity-decoherence},
$a_{1}$ gave 1 also on $\alpha_{4}$, which was \emph{not} in the
classical set. Similarly, $a_{2}$ yielded 1 on $\alpha_{3}$ too,
again \emph{not} in the classical set.

Finally, using the toy model of the square bit, we can study the uniqueness
of complete decoherence on some classical sets, showing that, for
some of them, the decoherence is unique (and therefore TID), while
in others it is highly non-unique. This is another important illustration
of how different GPTs can be from the quantum case.
\begin{example}
\label{exa:uniqueness}Consider the classical set $\boldsymbol{\alpha}=\mathrm{Conv}\left\{ \alpha_{1},\alpha_{2}\right\} $
in example~\ref{exa:purity-decoherence} again. Now we prove that
the TID $\widehat{D}_{\boldsymbol{\alpha}}=\left|\alpha_{1}\right)\left(a_{1}\right|+\left|\alpha_{2}\right)\left(a_{2}\right|$
is the only complete decoherence for that classical set. To this end,
let us consider a generic transformation $D$ on the square bit, which
can be represented as a square matrix of order 3:
\[
D=\left(\begin{array}{ccc}
d_{11} & d_{12} & d_{13}\\
d_{21} & d_{22} & d_{23}\\
d_{31} & d_{32} & d_{33}
\end{array}\right).
\]
We want this matrix to represent a complete decoherence $D_{\boldsymbol{\alpha}}$
on $\boldsymbol{\alpha}$. The first condition is to require it to
be a channel; therefore $uD=u$, where $u=\left(\begin{array}{ccc}
0 & 0 & 1\end{array}\right)$ is the deterministic effect (it yields 1 on all the pure states presented
in example~\ref{exa:purity-decoherence}). This condition implies
\begin{equation}
D_{\boldsymbol{\alpha}}=\left(\begin{array}{ccc}
d_{11} & d_{12} & d_{13}\\
d_{21} & d_{22} & d_{23}\\
0 & 0 & 1
\end{array}\right).\label{eq:generic decoherence}
\end{equation}
$D_{\boldsymbol{\alpha}}$ being a complete decoherence on $\boldsymbol{\alpha}$,
we have $D_{\boldsymbol{\alpha}}\alpha_{1}=\alpha_{1}$, $D_{\boldsymbol{\alpha}}\alpha_{2}=\alpha_{2}$,
and $D_{\boldsymbol{\alpha}}\alpha_{3}=p\alpha_{1}+\left(1-p\right)\alpha_{2}$,
for $p\in\left[0,1\right]$, and $D_{\boldsymbol{\alpha}}\alpha_{4}=q\alpha_{1}+\left(1-q\right)\alpha_{2}$,
for $q\in\left[0,1\right]$. Recalling the expression of the pure
states in example~\ref{exa:purity-decoherence}, the conditions $D_{\boldsymbol{\alpha}}\alpha_{1}=\alpha_{1}$,
$D_{\boldsymbol{\alpha}}\alpha_{2}=\alpha_{2}$, and $D_{\boldsymbol{\alpha}}\alpha_{3}=p\alpha_{1}+\left(1-p\right)\alpha_{2}$
yield the linear systems
\[
\left\{ \begin{array}{l}
-d_{11}+d_{12}+d_{13}=-1\\
-d_{11}-d_{12}+d_{13}=-1\\
d_{11}-d_{12}+d_{13}=-1
\end{array}\right.\qquad\left\{ \begin{array}{l}
-d_{21}+d_{22}+d_{23}=1\\
-d_{21}-d_{22}+d_{23}=-1\\
d_{21}-d_{22}+d_{23}=2p-1
\end{array}\right.,
\]
with $p\in\left[0,1\right]$. Solving them, we find that
\[
D_{\boldsymbol{\alpha}}=\left(\begin{array}{ccc}
0 & 0 & -1\\
p & 1 & p\\
0 & 0 & 1
\end{array}\right)
\]
for $p\in\left[0,1\right]$. Let us see if this matrix is compatible
with the condition $D_{\boldsymbol{\alpha}}\alpha_{4}=q\alpha_{1}+\left(1-q\right)\alpha_{2}$
for $q\in\left[0,1\right]$. We have
\[
q\alpha_{1}+\left(1-q\right)\alpha_{2}=q\left(\begin{array}{c}
-1\\
1\\
1
\end{array}\right)+\left(1-q\right)\left(\begin{array}{c}
-1\\
-1\\
1
\end{array}\right)=\left(\begin{array}{c}
-1\\
2q-1\\
1
\end{array}\right).
\]
On the other hand,
\[
D_{\boldsymbol{\alpha}}\alpha_{4}=\left(\begin{array}{ccc}
0 & 0 & -1\\
p & 1 & p\\
0 & 0 & 1
\end{array}\right)\left(\begin{array}{c}
1\\
1\\
1
\end{array}\right)=\left(\begin{array}{c}
-1\\
2p+1\\
1
\end{array}\right).
\]
This means that $2q-1=2p+1$, which means $q=p+1$. The only case
in which $q\in\left[0,1\right]$, when $p\in\left[0,1\right]$, is
when $p=0$. This means that we have a unique complete decoherence
on $\boldsymbol{\alpha}$, which is
\[
D_{\boldsymbol{\alpha}}=\left(\begin{array}{ccc}
0 & 0 & -1\\
0 & 1 & 0\\
0 & 0 & 1
\end{array}\right).
\]
This coincides with the TID $\widehat{D}_{\boldsymbol{\alpha}}=\left|\alpha_{1}\right)\left(a_{1}\right|+\left|\alpha_{2}\right)\left(a_{2}\right|$,
as it is easy to check. This means that on the classical set $\boldsymbol{\alpha}=\mathrm{Conv}\left\{ \alpha_{1},\alpha_{2}\right\} $
there is only one complete decoherence, which is exactly the TID.
By a symmetry argument, we have the same situation whenever we take
a classical set corresponding to a side of the square.

Something completely different, instead, happens when we take the
classical set to be a diagonal of the square. Take e.g.\ $\boldsymbol{\alpha}'=\mathrm{Conv}\left\{ \alpha_{1},\alpha_{3}\right\} $
(fig.~\ref{fig:square diagonal}).
\begin{figure}
\begin{centering}
\includegraphics[scale=0.7]{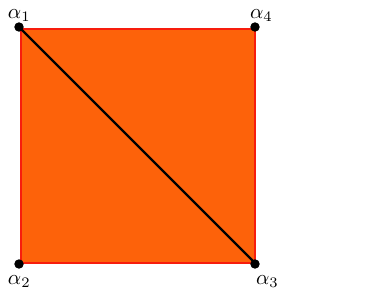}
\par\end{centering}
\caption{\label{fig:square diagonal}Another classical set $\boldsymbol{\alpha}'=\mathrm{Conv}\left\{ \alpha_{1},\alpha_{3}\right\} $
in the square bit, shown in black.}
\end{figure}
 Now we show that in this case we can find an uncountable number of
TIDs! To see it, let us characterize all the distinguishing measurements
for $\left\{ \alpha_{1},\alpha_{3}\right\} $. Consider a generic
effect $e=\left(\begin{array}{ccc}
e_{1} & e_{2} & e_{3}\end{array}\right)$; to be part of a distinguishing measurement, without loss of generality
we can assume $\left(e\middle|\alpha_{1}\right)=1$ and $\left(e\middle|\alpha_{3}\right)=0$.
These conditions imply that $e$ must be of the form
\[
e=\left(\begin{array}{ccc}
e_{2}-\frac{1}{2}, & e_{2}, & \frac{1}{2}\end{array}\right).
\]
This is not enough to guarantee that this is indeed an effect, because
it must give a valid probability on $\alpha_{2}$ and $\alpha_{4}$
as well. In other words, we must impose that $\left(e\middle|\alpha_{2}\right)\in\left[0,1\right]$
and $\left(e\middle|\alpha_{4}\right)\in\left[0,1\right]$. This gives
the following system of inequalities
\[
\left\{ \begin{array}{l}
-2e_{2}+1\geq0\\
-2e_{2}+1\leq1\\
2e_{2}\geq0\\
2e_{2}\leq1
\end{array}\right.,
\]
where the solution is $e_{2}\in\left[0,\frac{1}{2}\right]$. For these
values of $e_{2}$, $e$ is both a mathematically and a physically
allowed effect, because the no-restriction hypothesis is assumed for
the square bit \citep{Barrett}. Similarly, the effect 
\[
e'=u-e=\left(\begin{array}{ccc}
-e_{2}+\frac{1}{2}, & -e_{2}, & \frac{1}{2}\end{array}\right)
\]
is also a mathematically allowed effect. Therefore $\left\{ e,e'\right\} $
is a distinguishing measurement for $\left\{ \alpha_{1},\alpha_{3}\right\} $
for any $e_{2}\in\left[0,\frac{1}{2}\right]$. This gives rise to
a family of TIDs on the classical set $\boldsymbol{\alpha}'=\left\{ \alpha_{1},\alpha_{3}\right\} $
parameterized by a continuous parameter in $\left[0,\frac{1}{2}\right]$:
\begin{equation}
\widehat{D}_{\boldsymbol{\alpha}',t}=\left|\alpha_{1}\right)\left(e_{t}\right|+\left|\alpha_{3}\right)\left(e'_{t}\right|,\label{eq:general MIDs}
\end{equation}
where we have set $t:=e_{2}$ for simplicity of notation, and $t\in\left[0,\frac{1}{2}\right]$.
Using the parametrization of a generic state in eq.~\eqref{eq:paramtrization},
as depicted in fig.~\ref{fig:new parametrization}, we can exemplify
the behavior of the family of TIDs with the two extreme cases of $t=0$
and $t=\frac{1}{2}$.
\begin{figure}
\begin{centering}
\includegraphics[scale=0.7]{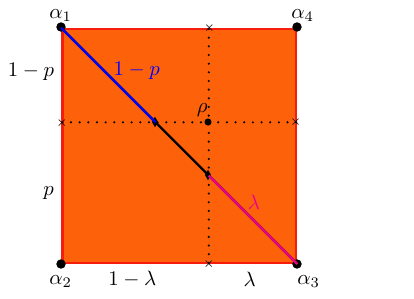}
\par\end{centering}
\caption{\label{fig:new parametrization}Using the intercept theorem, we can
map the coefficients of convex combinations from the sides of the
square to its diagonal. The blue segment is $1-p$ times the diagonal,
while the magenta segment is $\lambda$ times the diagonal.}
\end{figure}
\[
\widehat{D}_{\boldsymbol{\alpha}',0}\rho=\lambda\alpha_{1}+\left(1-\lambda\right)\alpha_{3}\qquad\widehat{D}_{\boldsymbol{\alpha}',\frac{1}{2}}\rho=p\alpha_{1}+\left(1-p\right)\alpha_{3}
\]
This is illustrated in fig.~\ref{fig:double decoherence}.
\begin{figure}
\begin{centering}
\includegraphics[scale=0.7]{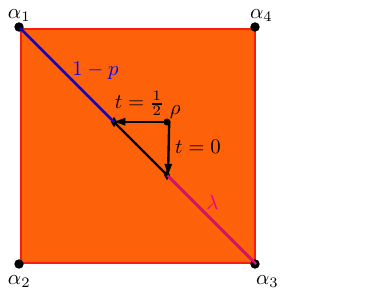}
\par\end{centering}
\caption{\label{fig:double decoherence}The action of the two extreme TIDs
$\widehat{D}_{\boldsymbol{\alpha}',0}$ and $\widehat{D}_{\boldsymbol{\alpha}',\frac{1}{2}}$
on a generic state of the square bit. Again, the tip of the arrow
represents the decohered state.}
\end{figure}
 This means that $\widehat{D}_{\boldsymbol{\alpha}',0}$ projects
every state onto the diagonal along the vertical sides, whereas $\widehat{D}_{\boldsymbol{\alpha}',\frac{1}{2}}$
projects every state onto the diagonal along the horizontal sides.
Even in this case, there are no other complete decoherences besides
the TIDs. Indeed, the generic matrix is like in eq.~\eqref{eq:generic decoherence}.
This time we require $D_{\boldsymbol{\alpha}'}\alpha_{1}=\alpha_{1}$,
$D_{\boldsymbol{\alpha}'}\alpha_{3}=\alpha_{3}$, and $D_{\boldsymbol{\alpha}'}\alpha_{2}=p\alpha_{1}+\left(1-p\right)\alpha_{3}$,
for $p\in\left[0,1\right]$, and $D_{\boldsymbol{\alpha}'}\alpha_{4}=q\alpha_{1}+\left(1-q\right)\alpha_{3}$,
for $q\in\left[0,1\right]$. From the conditions $D_{\boldsymbol{\alpha}'}\alpha_{1}=\alpha_{1}$,
$D_{\boldsymbol{\alpha}'}\alpha_{3}=\alpha_{3}$, and $D_{\boldsymbol{\alpha}'}\alpha_{2}=p\alpha_{1}+\left(1-p\right)\alpha_{3}$,
for $p\in\left[0,1\right]$, we obtain the two linear systems
\[
\left\{ \begin{array}{l}
-d_{11}+d_{12}+d_{13}=-1\\
d_{11}-d_{12}+d_{13}=1\\
-d_{11}-d_{12}+d_{13}=-2p+1
\end{array}\right.\qquad\left\{ \begin{array}{l}
-d_{21}+d_{22}+d_{23}=1\\
d_{21}-d_{22}+d_{23}=-1\\
-d_{21}-d_{22}+d_{23}=2p-1
\end{array}\right.,
\]
for $p\in\left[0,1\right]$. Solving them, we find that
\begin{equation}
D_{\boldsymbol{\alpha}'}=\left(\begin{array}{ccc}
p & p-1 & 0\\
-p & -p+1 & 0\\
0 & 0 & 1
\end{array}\right),\label{eq:general =00005Calpha'}
\end{equation}
for $p\in\left[0,1\right]$. Let us check if this matrix is compatible
with the condition $D_{\boldsymbol{\alpha}}\alpha_{4}=q\alpha_{1}+\left(1-q\right)\alpha_{3}$
for $q\in\left[0,1\right]$. We have
\[
q\alpha_{1}+\left(1-q\right)\alpha_{3}=q\left(\begin{array}{c}
-1\\
1\\
1
\end{array}\right)+\left(1-q\right)\left(\begin{array}{c}
1\\
-1\\
1
\end{array}\right)=\left(\begin{array}{c}
-2q+1\\
2q-1\\
1
\end{array}\right).
\]
On the other hand
\[
D_{\boldsymbol{\alpha}'}\alpha_{4}=\left(\begin{array}{ccc}
p & p-1 & 0\\
-p & -p+1 & 0\\
0 & 0 & 1
\end{array}\right)\left(\begin{array}{c}
1\\
1\\
1
\end{array}\right)=\left(\begin{array}{c}
2p-1\\
-2p+1\\
1
\end{array}\right).
\]
This implies that $2p-1=-2q+1$, whence $q=1-p$. If $p\in\left[0,1\right]$,
this guarantees that $q\in\left[0,1\right]$ too. There are no other
constraints, so the most general complete decoherence on $\boldsymbol{\alpha}'$
is given by the matrix~\eqref{eq:general =00005Calpha'}, with $p\in\left[0,1\right]$.
A straightforward check shows that the TIDs in eq.~\eqref{eq:general MIDs}
with $t\in\left[0,\frac{1}{2}\right]$ cover all the complete decoherences
in eq.~\eqref{eq:general =00005Calpha'} once we set $p:=-2t+1$.
This means that there are no complete decoherences on $\boldsymbol{\alpha}'$
other than the TIDs.
\end{example}

\section{Objectivity in general physical theories}

In this appendix we study the necessary ingredients to prove our main
result: the universality of the form of objective states across all
causal theories with classical states.

\subsection{Properties of sharply repeatable tests\label{sec:Sharply-repeatable-tests}}

A useful property for our research is that SRTs are stable under parallel
composition: if $\left\{ P_{i}\right\} $ and $\left\{ Q_{j}\right\} $
are SRTs, then $\left\{ P_{i}\otimes Q_{j}\right\} $ is also still
an SRT. Another important fact is that SRTs generate perfectly distinguishable
states.
\begin{lem}
\label{lem:distinguishable SRM}Let $\left\{ P_{i}\right\} _{i\in\mathsf{X}}$
be an SRT, and $\rho$ be a generic state, possibly not normalized.
Then if the subset of non-vanishing states of $\left\{ P_{i}\rho\right\} _{i\in\mathsf{X}}$
contains more than one element, these elements can be renormalized
so they are perfectly distinguishable.
\end{lem}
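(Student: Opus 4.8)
The plan is to build the distinguishing measurement directly out of the SRM, by post-composing each $P_i$ with the unique deterministic effect $u$. Concretely I would set $a_i := uP_i$ for every $i\in\mathsf{X}$, i.e.\ the effect that first applies the transformation $P_i$ and then discards the system with $u$. Each $a_i$ is a genuine effect (a transformation followed by an effect), and $\left\{ a_i\right\}_{i\in\mathsf{X}}$ is a genuine measurement: since $\left\{ P_i\right\}_{i\in\mathsf{X}}$ is a test, its complete coarse-graining $\sum_{i\in\mathsf{X}}P_i$ is a channel, and channels preserve the deterministic effect, so $\sum_{i\in\mathsf{X}}a_i = u\sum_{i\in\mathsf{X}}P_i = u$, exactly as required of a measurement. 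Note that Causality is used here in an essential way, precisely through the property $u\mathcal{C}=u$ for channels.

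The crucial computation then exploits sharp repeatability. Writing $\sigma_i := P_i\rho$ and $\lambda_i := \left(u\middle|\sigma_i\right)$, for any two outcomes one has $\left(a_i\middle|\sigma_j\right) = \left(u\middle|P_iP_j\middle|\rho\right) = \delta_{ij}\left(u\middle|P_i\rho\right) = \delta_{ij}\lambda_i$, using the SRM relation $P_iP_j=\delta_{ij}P_i$. Thus the $a_i$ already separate the family: $a_i$ vanishes on every $\sigma_j$ with $j\neq i$ and returns $\lambda_i$ on $\sigma_i$. Next I would check that a non-vanishing $\sigma_i$ necessarily has $\lambda_i>0$. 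For any effect $e$, completing $e$ to a measurement shows that $u-e$ is a sum of effects and hence nonnegative on states, so $\left(e\middle|\sigma_i\right)\le\left(u\middle|\sigma_i\right)=\lambda_i$; therefore $\lambda_i=0$ would force $\left(e\middle|\sigma_i\right)=0$ for all $e$, i.e.\ $\sigma_i=0$ by tomographic identification. Hence for each non-vanishing $\sigma_i$ the renormalized state $\widetilde\sigma_i := \lambda_i^{-1}\sigma_i$ is well defined and normalized, and $\left(a_i\middle|\widetilde\sigma_j\right)=\delta_{ij}$.

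Finally I would package this into a single measurement indexed only by the set $\mathsf{Y}\subseteq\mathsf{X}$ of non-vanishing outcomes. Choosing any $i_0\in\mathsf{Y}$, I would put $b_{i_0} := a_{i_0}+\sum_{k\notin\mathsf{Y}}a_k$ and $b_i := a_i$ for the remaining $i\in\mathsf{Y}$; then $\sum_{i\in\mathsf{Y}}b_i = \sum_{i\in\mathsf{X}}a_i = u$, so $\left\{ b_i\right\}_{i\in\mathsf{Y}}$ is a measurement. It still distinguishes the $\widetilde\sigma_j$, because the dumped terms contribute nothing: for $k\notin\mathsf{Y}$ and $j\in\mathsf{Y}$ one has $k\neq j$, whence $\left(a_k\middle|\widetilde\sigma_j\right)=\delta_{kj}\lambda_k/\lambda_j=0$, giving $\left(b_i\middle|\widetilde\sigma_j\right)=\delta_{ij}$ on $\mathsf{Y}$. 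This exhibits the renormalized non-vanishing states as distinguishable. The heart of the argument — that $uP_i$ is the natural distinguishing measurement — is forced immediately by the SRM relation together with Causality; the only genuinely delicate points are the implication ``non-vanishing $\Rightarrow\lambda_i>0$'' and the final bookkeeping needed to reindex a measurement on the sub-collection $\mathsf{Y}$, which is where I expect the main care to be required.
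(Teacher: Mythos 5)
Your proposal is correct and follows essentially the same route as the paper's own proof: both take the effects $a_i = uP_i$, use the SRM relation $P_iP_j=\delta_{ij}P_i$ to get $\left(a_i\middle|P_j\rho\right)=\delta_{ij}\left(u\middle|P_i\middle|\rho\right)$, and absorb the vanishing outcomes into a single chosen index $i_0$ to obtain a bona fide measurement on the non-vanishing set. The only difference is bookkeeping: you additionally verify that non-vanishing $P_i\rho$ implies $\left(u\middle|P_i\middle|\rho\right)>0$ (a point the paper's proof uses implicitly when renormalizing), while the paper separately notes non-emptiness of the non-vanishing set, which your hypothesis makes unnecessary.
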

\begin{proof}
Let $\mathsf{I}$ be the subset of $\mathsf{X}$ of the indices labeling
the non-vanishing elements in $\left\{ P_{i}\rho\right\} _{i\in\mathsf{X}}$.
First, let us show that $\mathsf{I}$ is always non-empty. Suppose
by contradiction that it is empty, then $P_{i}\rho=0$ for every $i$.
By Causality, $u=\sum_{i\in\mathsf{X}}uP_{i}$, so 
\[
\left(u\middle|\rho\right)=\sum_{i\in\mathsf{X}}\left(u\middle|P_{i}\middle|\rho\right)=0,
\]
which is impossible on a physical state. Suppose now $\left|\mathsf{I}\right|>1$.
Let us first renormalize the states $\left\{ P_{i}\rho\right\} _{i\in\mathsf{I}}$
by considering $\frac{P_{i}\rho}{\left(u\middle|P_{i}\middle|\rho\right)}$.
Let us prove that they are perfectly distinguished by the measurement
$\left\{ a_{i}\right\} _{i\in\mathsf{I}}$, where
\[
a_{i}=\begin{cases}
uP_{i} & i\neq i_{0}\\
uP_{i_{0}}+\sum_{i\notin\mathsf{I}}uP_{i} & i=i_{0}
\end{cases},
\]
for some arbitrary choice of $i_{0}\in\mathsf{I}$. For $i\neq i_{0}$,
\[
\frac{\left(a_{i}\middle|P_{j}\middle|\rho\right)}{\left(u\middle|P_{j}\middle|\rho\right)}=\frac{\left(u\middle|P_{i}P_{j}\middle|\rho\right)}{\left(u\middle|P_{j}\middle|\rho\right)}=\frac{\delta_{ij}}{\left(u\middle|P_{i}\middle|\rho\right)}\left(u\middle|P_{i}\middle|\rho\right)=\delta_{ij},
\]
where we have used the definition of SRT. Finally, for $i=i_{0}$,
\[
\frac{\left(a_{i_{0}}\middle|P_{j}\middle|\rho\right)}{\left(u\middle|P_{j}\middle|\rho\right)}=\frac{\left(u\middle|P_{i_{0}}P_{j}\middle|\rho\right)}{\left(u\middle|P_{j}\middle|\rho\right)}+\sum_{i\notin\mathsf{I}}\frac{\left(u\middle|P_{i}P_{j}\middle|\rho\right)}{\left(u\middle|P_{j}\middle|\rho\right)}=\delta_{i_{0}j}+\sum_{i\notin\mathsf{I}}\delta_{ij}\frac{\left(u\middle|P_{i}\middle|\rho\right)}{\left(u\middle|P_{j}\middle|\rho\right)},
\]
but the second term always vanishes because $P_{i}\rho=0$ for $i\notin\mathsf{I}$.
\end{proof}

\subsection{The general form of objective states in causal theories\label{subsec:form objective}}

It is not hard to show that if the joint state in the OG is SBS, the
players can win the game. Indeed, suppose the joint state is
\begin{equation}
\rho_{\mathrm{SE}}=\sum_{i=1}^{r}p_{i}\alpha_{i,\mathrm{S}}\otimes\rho_{i,\mathrm{E}_{1}}\otimes\ldots\otimes\rho_{i,\mathrm{E}_{n}},\label{eq:SBS}
\end{equation}
with $p_{i}>0$, where $\mathrm{E}$ denotes the joint environment
composed of all the fragments controlled by the players $\mathrm{E}=\mathrm{E}_{1}\ldots\mathrm{E}_{n}$.
Note that this state respects the strong independence condition: the
states of the various players are only correlated by index $i$ labeling
the outcome found by the referee on $\mathrm{S}$. In this game, the
referee applies the SRT containing the transformations $\left|\alpha_{i}\right)\left(a_{i}\right|$.
If $\left\{ \alpha_{i}\right\} _{i=1}^{r}$ is a maximal set of perfectly
distinguishable pure states, $\left\{ \left|\alpha_{i}\right)\left(a_{i}\right|\right\} _{i=1}^{r}$
will be a test, otherwise it is enough to add other pure states to
$\left\{ \alpha_{i}\right\} _{i=1}^{r}$ until it becomes maximal
$\left\{ \alpha_{i}\right\} _{i=1}^{d}$, with $d>r$. In this latter
case, the SRT performed by the referee will be $\left\{ \left|\alpha_{i}\right)\left(a_{i}\right|\right\} _{i=1}^{d}$.

What about the other players? What is their strategy to win the game?
Since the states $\left\{ \rho_{i,\mathrm{E}_{k}}\right\} _{i=1}^{r}$
are perfectly distinguishable for every $k$, each player just needs
to perform the SRT associated with them, namely $\left\{ P_{i,\mathrm{E}_{k}}=\left|\rho_{i,\mathrm{E}_{k}}\right)\left(a_{i,\mathrm{E}_{k}}\right|\right\} _{i=1}^{r}$,
where $\left\{ a_{i,\mathrm{E}_{k}}\right\} _{i=1}^{r}$ is the measurement
that distinguishes them. Note that, $P_{i,\mathrm{E}_{k}}\rho_{j,\mathrm{E}_{k}}=\delta_{ij}\rho_{i,\mathrm{E}_{k}}$,
so this SRT does not disturb the state~\eqref{eq:SBS} in a strong
sense. This shows that every causal theory has objective states.

The non-trivial part is to show that these are the \emph{only} objective
states. The key step is the following lemma.
\begin{lem}
\label{lem:non-disturbing}Let $\rho_{\mathrm{SE}}$ be a state such
that $\mathrm{tr}_{\mathrm{E}}\rho_{\mathrm{SE}}=\sum_{i=1}^{r}p_{i}\alpha_{i}$,
where $p_{i}>0$ for all $i$, and the $\alpha_{i}$'s are the pure
states of a $d$-dimensional classical set $\boldsymbol{\alpha}$,
with $d\geq r$. If 
\begin{equation}
\sum_{i=1}^{d}\left(\left|\alpha_{i}\right)\left(a_{i}\right|\otimes P_{i}\right)\rho=\rho,\label{eq:condition}
\end{equation}
where the $P_{i}$'s are transformations in an SRT on $\mathrm{E}$,
then $\rho_{\mathrm{SE}}$ must be of the form
\[
\rho=\sum_{i=1}^{r}p_{i}\alpha_{i}\otimes\rho_{i},
\]
where $\left\{ \rho_{i}\right\} _{i=1}^{r}$ are perfectly distinguishable
states of $\mathrm{E}$.
\end{lem}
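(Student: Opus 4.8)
The plan is to show that the non-disturbance condition~\eqref{eq:condition} already forces a classical block structure on $\rho$ along the system side, and then to extract the distinguishability of the environmental pieces purely from the SRM property of the $P_i$. First I would exploit the measure-and-prepare form $\mathcal{A}_i=|\alpha_i)(a_i|$ of the referee's transformations on $\mathrm{S}$. Since $\mathcal{A}_i$ and $P_i$ act on disjoint systems, applying $\mathcal{A}_i\otimes P_i$ to the bipartite state first contracts the $\mathrm{S}$-wire with the effect $a_i$, producing a sub-normalized state $\tau_i:=(a_i\otimes\mathcal{I}_{\mathrm{E}})\rho$ on $\mathrm{E}$, while $|\alpha_i)$ freshly prepares $\alpha_i$ independently of everything else. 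Thus $(\mathcal{A}_i\otimes P_i)\rho=\alpha_i\otimes\sigma_i$ with $\sigma_i:=P_i\tau_i\in\mathsf{St}_{+}\left(\mathrm{E}\right)$, and summing over $i$ with~\eqref{eq:condition} yields $\rho=\sum_{i=1}^{d}\alpha_i\otimes\sigma_i$. So non-disturbance alone pins $\rho$ to be classical on the system side, decomposed along the $\alpha_i$.

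Next I would determine which blocks survive and fix their weights by taking the marginal on $\mathrm{S}$. Discarding $\mathrm{E}$ gives $\mathrm{tr}_{\mathrm{E}}\rho=\sum_{i=1}^{d}(u|\sigma_i)\,\alpha_i$, which by hypothesis equals $\sum_{i=1}^{r}p_i\alpha_i$. Since distinguishable pure states are linearly independent (as established in the proof of Proposition~\ref{prop:no-go distinguishable}), I can match coefficients to get $(u|\sigma_i)=p_i$ for $i\leq r$ and $(u|\sigma_i)=0$ for $i>r$. In a causal theory a state whose deterministic-effect value vanishes is the zero vector, so $\sigma_i=0$ for $i>r$, leaving $\rho=\sum_{i=1}^{r}p_i\,\alpha_i\otimes\rho_i$ with normalized $\rho_i:=\sigma_i/p_i$. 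This is already the claimed SBS shape; it remains only to prove the $\rho_i$ are distinguishable.

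For distinguishability I would feed the SRM property back in. Because $\sigma_i=P_i\tau_i$ and $P_kP_i=\delta_{ki}P_i$, one obtains $P_k\sigma_i=\delta_{ki}\sigma_i$, hence $P_k\rho_i=\delta_{ki}\rho_i$. The environmental marginal $\mathrm{tr}_{\mathrm{S}}\rho=\sum_{i=1}^{r}p_i\rho_i$ then satisfies $P_k(\mathrm{tr}_{\mathrm{S}}\rho)=p_k\rho_k$ for $k\leq r$ and vanishes otherwise, so its non-vanishing SRM-components renormalize exactly to the $\rho_i$, and Lemma~\ref{lem:distinguishable SRM} immediately gives their distinguishability. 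Equivalently, the effects $uP_k$ satisfy $(uP_k|\rho_i)=\delta_{ki}$, and collapsing the outcomes $k>r$ into a single outcome produces an explicit distinguishing measurement.

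The main obstacle I expect is the first step: carefully justifying the factorization $(\mathcal{A}_i\otimes P_i)\rho=\alpha_i\otimes\sigma_i$ within the circuit formalism, since transformations act on halves of bipartite states and one must verify that the freshly prepared $\alpha_i$ genuinely decouples. Once that is clean, everything else is bookkeeping; the only other point needing care is the treatment of the \emph{spurious} indices $r<i\leq d$, where $\alpha_i$ belongs to the classical set but carries zero weight in the marginal. It is precisely the marginal-plus-linear-independence argument that removes these terms, and skipping it would leave the block structure under-determined.
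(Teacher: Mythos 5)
Your proposal is correct and follows essentially the same route as the paper's proof: factorize each term of the non-disturbance sum using the measure-and-prepare structure of $\left|\alpha_{i}\right)\left(a_{i}\right|$ to get the block form $\rho=\sum_{i}\alpha_{i}\otimes\sigma_{i}$, fix the weights (and kill the indices $i>r$) via the $\mathrm{S}$-marginal, and invoke Lemma~\ref{lem:distinguishable SRM} together with the SRM property for distinguishability. If anything, your final step is slightly cleaner than the paper's: you apply Lemma~\ref{lem:distinguishable SRM} to the single state $\mathrm{tr}_{\mathrm{S}}\rho_{\mathrm{SE}}$, whose nonvanishing SRM components are exactly $p_{i}\rho_{i}$, which matches that lemma's hypothesis literally, whereas the paper invokes it for the family $\rho_{i}=P_{i}\sigma_{i}$ with varying $\sigma_{i}$.
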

\begin{proof}
Let us rewrite eq.\ \eqref{eq:condition} in diagrams.\begin{equation}\label{eq:condition diagrams} \sum_{i=1}^{d}\!\!\!\!\begin{aligned}\Qcircuit @C=1em @R=.7em @!R { & \multiprepareC{1}{\rho} & \qw \poloFantasmaCn{\rS} &\measureD{a_i} &\prepareC{\alpha_i} & \qw \poloFantasmaCn{\rS} &\qw \\ & \pureghost{\rho} & \qw \poloFantasmaCn{\rE} & \gate{P_i} & \qw \poloFantasmaCn{\rE} &\qw &\qw}\end{aligned}~=\!\!\!\!\begin{aligned}\Qcircuit @C=1em @R=.7em @!R { & \multiprepareC{1}{\rho} & \qw \poloFantasmaCn{\rS} &\qw \\ & \pureghost{\rho} & \qw \poloFantasmaCn{\rE} &\qw}\end{aligned}~. \end{equation}The
left-hand side can be rewritten as\[ \sum_{i=1}^{d}\!\!\!\!\begin{aligned}\Qcircuit @C=1em @R=.7em @!R { & \multiprepareC{1}{\rho} & \qw \poloFantasmaCn{\rS} &\measureD{a_i} &\prepareC{\alpha_i} & \qw \poloFantasmaCn{\rS} &\qw \\ & \pureghost{\rho} & \qw \poloFantasmaCn{\rE} & \gate{P_i} & \qw \poloFantasmaCn{\rE} &\qw &\qw}\end{aligned}~=~\sum_{i=1}^{d}\lambda_i\!\!\!\!\begin{aligned}\Qcircuit @C=1em @R=.7em @!R { &\prepareC{\alpha_i} & \qw \poloFantasmaCn{\rS} &\qw \\ & \prepareC{\rho_i} & \qw \poloFantasmaCn{\rE} &\qw}\end{aligned}~, \]where
we have set\begin{equation}\label{eq:rho_i} \lambda_i\!\!\!\!\begin{aligned}\Qcircuit @C=1em @R=.7em @!R {& \prepareC{\rho_i} & \qw \poloFantasmaCn{\rE} &\qw}\end{aligned}~:=\!\!\!\!\begin{aligned}\Qcircuit @C=1em @R=.7em @!R { & \multiprepareC{1}{\rho} & \qw \poloFantasmaCn{\rS} &\qw &\qw &\measureD{a_i} \\ & \pureghost{\rho} & \qw \poloFantasmaCn{\rE} & \gate{P_i} & \qw \poloFantasmaCn{\rE}&\qw}\end{aligned}~, \end{equation}so
that $\rho_{i}$ is normalized, and $\lambda_{i}\in\left[0,1\right]$.
Now eq.~\eqref{eq:condition diagrams} becomes\begin{equation}\label{eq:resolved} \begin{aligned}\Qcircuit @C=1em @R=.7em @!R { & \multiprepareC{1}{\rho} & \qw \poloFantasmaCn{\rS} &\qw \\ & \pureghost{\rho} & \qw \poloFantasmaCn{\rE} &\qw}\end{aligned}~=~\sum_{i=1}^{d}\lambda_i\!\!\!\!\begin{aligned}\Qcircuit @C=1em @R=.7em @!R { &\prepareC{\alpha_i} & \qw \poloFantasmaCn{\rS} &\qw \\ & \prepareC{\rho_i} & \qw \poloFantasmaCn{\rE} &\qw}\end{aligned}~. \end{equation}To
conclude the proof we must show that the non-vanishing $\lambda_{i}$'s
are the coefficients $p_{i}$'s, and that the states $\rho_{i}$'s
are perfectly distinguishable. Since $\mathrm{tr}_{\mathrm{E}}\rho_{\mathrm{SE}}=\sum_{i=1}^{r}p_{i}\alpha_{i}$,
we have that for $i=1,\ldots,r$\begin{equation}\label{eq:p_i} p_i~=\!\!\!\!\begin{aligned}\Qcircuit @C=1em @R=.7em @!R { & \multiprepareC{1}{\rho} & \qw \poloFantasmaCn{\rS} &\measureD{a_i} \\ & \pureghost{\rho} & \qw \poloFantasmaCn{\rE} & \measureD{u}}\end{aligned}~. \end{equation}By
eq.~\eqref{eq:resolved}, one has\[ \lambda_i~=\!\!\!\begin{aligned}\Qcircuit @C=1em @R=.7em @!R { & \multiprepareC{1}{\rho} & \qw \poloFantasmaCn{\rS} &\measureD{a_i} \\ & \pureghost{\rho} & \qw \poloFantasmaCn{\rE} & \measureD{u}}\end{aligned}~=~p_i \]for
$i=1,\ldots,r$, and $\lambda_{i}=0$ for $i>r$. This means we can
replace the summation from 1 to $d$ with a summation from 1 to $r$.

Now we must prove that the states $\rho_{i}$ are perfectly distinguishable.
Now rewrite eq.~\eqref{eq:rho_i}, for $i=1,\ldots,r$ as 
\[
p_{i}\rho_{i}=\mu_{i}P_{i}\sigma_{i},
\]
where\[ \mu_i\!\!\!\!\begin{aligned}\Qcircuit @C=1em @R=.7em @!R {& \prepareC{\sigma_i} & \qw \poloFantasmaCn{\rE} &\qw}\end{aligned}~:=\!\!\!\!\begin{aligned}\Qcircuit @C=1em @R=.7em @!R { & \multiprepareC{1}{\rho} & \qw \poloFantasmaCn{\rS} &\measureD{a_i} \\ & \pureghost{\rho} & \qw \poloFantasmaCn{\rE} & \qw }\end{aligned}~, \]$\sigma_{i}$
is normalized and $\mu_{i}\in\left[0,1\right]$. A quick comparison
with eq.~\eqref{eq:p_i} shows that $\mu_{i}=p_{i}>0$ for all $i=1,\ldots,r$,
thus $\rho_{i}=P_{i}\sigma_{i}$. Lemma~\ref{lem:distinguishable SRM}
ensures that the states $\rho_{i}$ are perfectly distinguishable
(and in this case without even renormalizing them).
\end{proof}
Now we can give the proof of our main result (cf.\ theorem~\ref{fig:The-state-space}
in the main text).
\begin{proof}[Proof of theorem~\ref{fig:The-state-space} in the main text]
In the setting of the OG, if the referee checks the findings with
the SRT $\left\{ \left|\alpha_{i}\right)\left(a_{i}\right|\right\} _{i=1}^{d}$
on $\mathrm{S}$, and the other players apply some SRTs $\left\{ P_{j_{k},\mathrm{E}_{k}}\right\} $
on each $\mathrm{E}_{k}$, the probability of a joint outcome $\left(i,j_{1},\ldots,j_{n}\right)$
is 
\[
p_{ij_{1}\ldots j_{n}}=\mathrm{tr}\left[\left(\left|\alpha_{i}\right)\left(a_{i}\right|\otimes P_{j_{1},\mathrm{E}_{1}}\otimes\ldots\otimes P_{j_{n},\mathrm{E}_{n}}\right)\rho_{\mathrm{SE}}\right].
\]
Imposing the agreement condition we must have $p_{ij_{1}\ldots j_{n}}=0$
unless $i=j_{1}=\ldots=j_{n}$ \citep{Objectivity}. This means that,
if we forget the outcome, the state after the measurement is 
\[
\sum_{i,j_{1},\ldots,j_{n}}\left(\left|\alpha_{i}\right)\left(a_{i}\right|\otimes P_{j_{1},\mathrm{E}_{1}}\otimes\ldots\otimes P_{j_{n},\mathrm{E}_{n}}\right)\rho_{\mathrm{SE}}=\sum_{i=1}^{d}\left(\left|\alpha_{i}\right)\left(a_{i}\right|\otimes P_{i,\mathrm{E}_{1}}\otimes\ldots\otimes P_{i,\mathrm{E}_{n}}\right)\rho_{\mathrm{SE}}.
\]
Now, let us define $P_{i}:=P_{i,\mathrm{E}_{1}}\otimes\ldots\otimes P_{i,\mathrm{E}_{n}}$,
which is an SRT on $\mathrm{E}$. Imposing the Bohr non-disturbance
condition (cf.\ definition~\ref{def:non-disturbing} in main text)
to the test $\left\{ \left|\alpha_{i}\right)\left(a_{i}\right|\otimes P_{i}\right\} $,
we find
\[
\sum_{i=1}^{d}\left(\left|\alpha_{i}\right)\left(a_{i}\right|\otimes P_{i,\mathrm{E}}\right)\rho_{\mathrm{SE}}=\rho_{\mathrm{SE}}.
\]
Now we are in the situation of lemma~\ref{lem:non-disturbing}, so
we know that $\rho_{\mathrm{SE}}$ must be of the form $\rho=\sum_{i=1}^{r}p_{i}\alpha_{i}\otimes\rho_{i}$,
where the $\rho_{i}$'s are perfectly distinguishable states of $\mathrm{E}$.
Imposing the strong independence condition, $\rho_{i}$ must be a
product state, with the only correlations given by the index $i$:
\[
\rho_{i}=\rho_{i,\mathrm{E}_{1}}\otimes\ldots\otimes\rho_{i,\mathrm{E}_{n}},
\]
where for every $k$, the states $\left\{ \rho_{i,\mathrm{E}_{k}}\right\} _{i=1}^{r}$
are perfectly distinguishable. This concludes the proof.
\end{proof}

\section{Emergence of composite classical systems\label{sec:Emergence-of-composite}}

In this appendix we elaborate more on the issue of composition of
classical sub-theories of a given causal theory. For simplicity, we
focus on bipartite systems; the generalization to more than two parties
will be straightforward.

Consider now a bipartite system $\mathrm{AB}$ of a generic causal
theory, and suppose system $\mathrm{A}$ has the classical set $\boldsymbol{\alpha}$,
and system $\mathrm{B}$ the classical set $\boldsymbol{\beta}$.
If $\boldsymbol{\alpha}$ is to represent an actual classical sub-theory
for system $\mathrm{A}$, and $\boldsymbol{\beta}$ an actual classical
sub-theory for system $\mathrm{B}$, it is natural to expect that
the classical set for the composite system should mirror the composition
of classical theory. Consequently, we would like to define the composite
classical set as
\begin{equation}
\boldsymbol{\alpha\beta}:=\mathrm{Conv}\left\{ \gamma_{\mathrm{A}}\otimes\gamma_{\mathrm{B}}:\gamma_{\mathrm{A}}\in\boldsymbol{\alpha},\gamma_{\mathrm{B}}\in\boldsymbol{\beta}\right\} .\label{eq:composite classical}
\end{equation}
In particular, this definition implies that the pure states of the
classical set for $\mathrm{AB}$ should \emph{all} be of the form
$\alpha_{i}\otimes\beta_{j}$, where $\alpha_{i}$ is a pure state
of $\boldsymbol{\alpha}$, and $\beta_{j}$ a pure state of $\boldsymbol{\beta}$.
However, here we face two problems. The first is that the product
of two pure states may not be pure in general, as shown in ref.~\citep{BarnumGraydonWilceCCEJA},
and the second is that the set $\left\{ \alpha_{i}\otimes\beta_{j}\right\} $
may not be maximal for the composite system $\mathrm{AB}$ as shown
in ref.~\citep{Hardy-limited}, which means that there are extra
pure states to add.

Now, in this setting, axioms~\ref{axm:product pure} and \ref{axm:information locality}
in the main text are introduced to rule out this pathological \emph{holistic}
behavior. Indeed, if the first axiom fails, and the product of two
pure states is \emph{not} pure, the idea that the classical states
of a composite system be reducible to the classical states of its
components faces a considerable difficulty. In this case, since the
product states are mixed, the theory is so holistic that, to construct
the classical set for the composite system, we have to look for completely
different states. If, instead, the theory satisfies axiom~\ref{axm:product pure}
in the main text, but it fails axiom~\ref{axm:information locality}
in the main text, we can construct the classical set for $\mathrm{AB}$
partially out of $\boldsymbol{\alpha}$ and $\boldsymbol{\beta}$,
but we need some extra pure states of $\mathrm{AB}$ to make it maximal.
Even in this case the theory shows a holistic behavior, and does not
support the emergence of proper classical composite systems. If both
axioms are satisfied, we have that the classical set for the composite
system $\mathrm{AB}$ is given by~\eqref{eq:composite classical}.

In a similar spirit, in the presence of axioms~\ref{axm:product pure}
and \ref{axm:information locality} in the main text, it is natural
to expect that the decoherence process on a bipartite system is reducible
to the decoherence of the two components \citep{Selby-leaks,Selby-entanglement2,CPT,2roads}.
In formula,
\begin{equation}
D_{\boldsymbol{\alpha\beta}}\rho_{\mathrm{AB}}=\left(D_{\boldsymbol{\alpha}}\otimes D_{\boldsymbol{\beta}}\right)\rho_{\mathrm{AB}}\label{eq:decoherence-product}
\end{equation}
for every bipartite state $\rho_{\mathrm{AB}}$. However, using the
most general definition of complete decoherence (definition~\ref{def:complete decoherence}
in the main text) we cannot compare the action of $D_{\boldsymbol{\alpha}}\otimes D_{\boldsymbol{\beta}}$
to the action of $D_{\boldsymbol{\alpha\beta}}$, as there is no specific
recipe for decohering states.

Let us see, instead, what happens when we consider TIDs. Let $\left\{ a_{i}\right\} _{i=1}^{d_{\mathrm{A}}}$
and $\left\{ b_{j}\right\} _{j=1}^{d_{\mathrm{B}}}$ be the measurements
associated with $\boldsymbol{\alpha}$ and $\boldsymbol{\beta}$ respectively;
the measurement associated with $\boldsymbol{\alpha\beta}$ will be
$\left\{ a_{i}\otimes b_{j}\right\} _{i=1,}^{d_{\mathrm{A}}}\phantom{}_{j=1}^{d_{\mathrm{B}}}$.
Therefore, the TID $\widehat{D}_{\boldsymbol{\alpha\beta}}$ is the
channel
\[
\widehat{D}_{\boldsymbol{\alpha\beta}}=\sum_{i=1}^{d_{\mathrm{A}}}\sum_{j=1}^{d_{\mathrm{B}}}\left|\alpha_{i}\right)\left|\beta_{j}\right)\left(a_{i}\right|\left(b_{j}\right|.
\]
An easy rewriting shows that 
\[
\widehat{D}_{\boldsymbol{\alpha\beta}}=\sum_{i=1}^{d_{\mathrm{A}}}\left|\alpha_{i}\right)\left(a_{i}\right|\otimes\sum_{j=1}^{d_{\mathrm{B}}}\left|\beta_{j}\right)\left(b_{j}\right|=\widehat{D}_{\boldsymbol{\alpha}}\otimes\widehat{D}_{\boldsymbol{\beta}},
\]
which is one of the desiderata of ref.~\citep{Selby-entanglement2}.
This means that for TIDs the behavior on composite systems is fully
reducible to the behavior on the components. Note that this result
is stronger than eq.~\eqref{eq:decoherence-product} in the absence
of Local Tomography \citep{Chiribella-purification} (see appendix~\ref{sec:General-probabilistic-theories}). 
\end{document}